\let\latexdocument\document
\let\latexarabic\arabic
\let\document\latexdocument
\let\arabic\latexarabic
\newcommand\Pimathsymbol[3][\mathord]{%
  #1{\@Pimathsymbol{#2}{#3}}}
\def\@Pimathsymbol#1#2{\mathchoice
  {\@Pim@thsymbol{#1}{#2}\tf@size}
  {\@Pim@thsymbol{#1}{#2}\tf@size}
  {\@Pim@thsymbol{#1}{#2}\sf@size}
  {\@Pim@thsymbol{#1}{#2}\ssf@size}}
\def\@Pim@thsymbol#1#2#3{%
  \mbox{\fontsize{#3}{#3}\Pisymbol{#1}{#2}}}
\DeclareFontFamily{U}{psyr}{}
\DeclareFontShape{U}{psyr}{m}{n}{<-> psyr }{}
\DeclareFontFamily{U}{psyro}{}
\DeclareFontShape{U}{psyro}{m}{n}{<-> psyro }{}
\newtheorem{assumption}{Assumption}
\newtheorem{remark}{Remark}
\newtheorem{definition}{Definition}
\newtheorem{theorem}{Theorem}
\newtheorem{lemma}{Lemma}
\newenvironment{proof}{\paragraph{Proof:}}{\hfill$\square$}
\newcommand{\cadlag}{c\`{a}dl\`{a}g }
\newcommand{\EE}{\mathbb{E}}
\newcommand{\eps}{\varepsilon}
\newcommand{\R}{\mathbb{R}}
\newcommand{\1}{\mathbb{1}}
\newcommand{\tmax}{\tau}
\renewcommand{\algocf@captiontext}[2]{#1\algocf@typo. \AlCapFnt{}#2} 
\def\@algocf@capt@plain{top}
\renewcommand{\algocf@makecaption}[2]{%
  \addtolength{\hsize}{\algomargin}%
  \sbox\@tempboxa{\algocf@captiontext{#1}{#2}}%
  \ifdim\wd\@tempboxa >\hsize
    \hskip .5\algomargin%
    \parbox[t]{\hsize}{\algocf@captiontext{#1}{#2}}
  \else%
    \global\@minipagefalse%
    \hbox to\hsize{\box\@tempboxa}
  \fi%
  \addtolength{\hsize}{-\algomargin}%
}
\begin{document}

\title{One-step TMLE for targeting cause-specific absolute risks and
  survival curves\\[0.5cm]
\textit{Tech report}}

\author{\small Helene C. W. Rytgaard$^{1,*}$ and Mark J. van der Laan$^{2}$ \\
  \small $^{1}$Section of Biostatistics, University of Copenhagen,
  Denmark\\ \small$^{2}$Devision of Biostatistics, University of
  California, Berkeley}

\maketitle

\begin{abstract}
  This paper considers one-step targeted maximum likelihood estimation
  methodology for general competing risks and survival analysis settings
  where event times take place on the positive real line \(\R_+\) and
  are subject to right-censoring.  Our interest is overall in the
  effects of baseline treatment decisions, static, dynamic or
  stochastic, possibly confounded by pre-treatment covariates. We
  point out two overall contributions of our work. First, our method
  can be used to obtain simultaneous inference across all absolute
  risks in competing risks settings. Second, we present a practical
  result for achieving inference for the full survival curve, or a
  full absolute risk curve, across time by targeting over a fine
  enough grid of points. The one-step procedure is based on a
  one-dimensional universal least favorable submodel for each
  cause-specific hazard that can be implemented in recursive steps
  along a corresponding universal least favorable submodel. We present
  a theorem for conditions to achieve weak convergence of the
  estimator for an infinite-dimensional target parameter. Our
  empirical study demonstrates the use of the methods.
\end{abstract}

\section{Introduction}

This work proceeds on the basis of the work of \citet[][Chapter
5]{van2016one,van2018targeted} to construct a one-step targeted
maximum likelihood estimation procedure for survival and competing
risks settings with event times taking place on the positive real line
and interest is in the effect of a treatment assigned at baseline
adjusted for baseline covariates. This is the one-step version of the
iterative targeted maximum likelihood estimation method presented in
\cite{rytgaard2021estimation}. For the present paper, we point out two
important contributions to causal inference in survival and competing
risks analysis:
\begin{enumerate}
\item We present a method for analyzing treatments effects on all
  state occupation probabilities in competing risks settings
  simultaneously. Specifically, our method can be used to achieve
  multivariate inference across all of the cause-specific absolute
  risks.
\item Our method provides a practical procedure to obtain weak
  convergence and asymptotic efficient estimation of the full survival
  curve, or a cause-specific absolute risk curve, across all
  time-points in an interval of \(\R_+\).
\end{enumerate}

We emphasize our contribution to competing risks analysis
settings. Here, interest is often in the absolute risk, or the
subdistribution, of a single event type of interest; however,
providing inference solely for a single subdistribution may lead to
obscure conclusions like a medical treatment in fact killing patients
being a protective treatment for the event of interest. This problem
is not solved by considering cause-specific hazard functions instead,
as these generally fail to have causal interpretation
\citep{hernan2010hazards,martinussen2018subtleties}. Alternative
one-dimensional causal estimands in competing risks analysis are
discussed in recent work by \cite{young2018causal} and
\cite{stensrud2019separable} who propose and distinguish different
approaches to isolate effects on a single absolute risks of interest
relying on further untestable assumptions on the nature of the
treatment mechanism. In our work, we propose an alternative route to
carry out a complete analysis of all absolute risks simultaneously;
this will reveal all patterns and give the full picture of the effects
of a given treatment.

Targeted maximum likelihood estimation
\citep{van2006targeted,van2011targeted} is a general methodology for
semiparametric efficient substitution estimation
\citep{bickel1993efficient,vanRobins2003unified,tsiatis2007semiparametric}
of causal parameters consisting of two steps. First, flexible machine
learning methods are applied to estimate high-dimensional nuisance
parameters. This is followed by a targeting step applied to update the
initial estimators to solve a desired equation of interest, the
efficient influence curve equation. This second step yields asymptotic
linearity, double robustness and efficiency under weak conditions on
the statistical model assumed for the data-generating distribution
\citep{van2017generally}. Our focus in this work is on the targeting
step; methods for flexible initial estimation is discussed in previous
work, see, e.g., \cite{rytgaard2021continuous,rytgaard2021estimation}.

The usual targeting procedures are constructed based on local least
favorable submodels \citep{van2000asymptotic,van2011targeted} along
which current estimators are updated iteratively to finally solve the
efficient score equation of interest. A one-step targeting procedure,
on the other hand, is constructed based on universal least favorable
submodels, as introduced by \cite{van2016one}, based on which the
updating scheme only requires a single iteration. In addition, the
universal least favorable submodels provide a method to solve multiple
score equation simultaneously, again with only a single update
step. As shown by \cite{van2016one}, the maximum likelihood estimator
based universal least favorable submodels can be found by recursive
infinitesimal updating steps along a local least favorable submodel.

In this work, we construct a one-step targeting procedure to target
both multivariate and infinite-dimensional target parameters in
survival and competing risks settings. Particularly, we present a
methodology for targeting all cause-specific subdistribution
simultaneously and for targeting both subdistributions and survival
curves across multiple time-points. We further show that we can get
inference for the full survival curve across time by targeting over a
grid which is fine enough.  Due to the simultaneous targeting, the
one-step targeted maximum likelihood estimator is guaranteed to yield
monotone survival curves and cause-specific absolute risks that are
guaranteed to sum up to one, thus completely respecting the parameter
space constraints of the problem. Furthermore, the simultaneous
inference we provide for the target parameter yields a direct
methodology for multiple testing correction. We remark that focus in
the current field of semiparametric efficient estimation for survival
and competing risks analysis has been on efficient estimation of
one-dimensional parameters, corresponding to solving just a single
score equation
\citep{hubbard2000nonparametric,moore2009covariate,stitelman2011targeted2,benkeser2018improved,ozenne2020estimation,rytgaard2021estimation}.
To use these methods to construct efficient estimators for the
multivariate target parameters would require separate estimation of
each one-dimensional component of the target parameter.

In our one-step procedure, one has the choice between different 
Hilbert space norms which guide the direction of the one-step targeted
update. We consider and discuss three different choices and explore in
a simulation study their finite-sample performance compared to each
other and to the iterative targeted maximum likelihood estimation
method from \cite{rytgaard2021estimation}.

\section{Setting and notation}
\label{sec:intro:setting:notation}

We consider a competing risks setting with \(J \ge 1\) causes with observed
data on the form
\begin{align*}
  O= (L, A, \tilde{T}, \tilde{\Delta}) \in \R^{d'} \times  \lbrace 0,1\rbrace \times
  \R_+ \times \lbrace 0, 1, \ldots, J\rbrace, 
\end{align*}
where \(A \in \lbrace 0,1\rbrace\) is a baseline treatment variable,
\(L \in \R^{d'}\) are pretreatment covariates, \(\tilde{T}\in \R_+\)
is the observed time under observation and
\(\tilde{\Delta} \in \lbrace 0, 1, \ldots, J\rbrace\) is an indicator
of right-censoring (\(\tilde{\Delta} = 0\)) or type of event
(\(\tilde{\Delta} \ge 1\)) observed. The variables \(T\in \R_+\) and
\(C\in \R_+\) represent the times to event, one of \(J\ge 1\) types,
and censoring, respectively, such that the observed time is
\(\tilde{T} = \min ( T, C)\) and
\(\tilde{\Delta} = \1\lbrace T \ge C \rbrace \Delta\) where \(\Delta\)
is the uncensored event indicator. The special-case with \(J=1\)
corresponds to a classical survival analysis setting. Let \(P_0\)
denote the distribution of \(O\) and assume that \(P_0\) belongs to
the nonparametric statistical model \(\mathcal{M}\). For
\(j=1,\ldots, J\), let \(\lambda_{0,j}\) denote the cause \(j\)
specific hazard, defined as
\begin{align*}
  \lambda_{0,j}(t \, \vert \, a, \ell)
  & =
    \underset{h \rightarrow 0}{\lim} \,\, h^{-1} P(T \le t+h,  \Delta=j \mid {T} \ge t,
    A=a, L=\ell),  
\end{align*}
and let \(\Lambda_{0,j}(t \mid a,\ell)\) denote the corresponding
cumulative hazard. Likewise, let \(\lambda_0^c(t \mid a,\ell)\) denote
the conditional hazard for censoring and
\(\Lambda_0^c(t \mid a,\ell)\) the corresponding cumulative hazard.

Let further \(\mu_0\) be the density of \(L\) with respect to an
appropriate dominating measure \(\nu\) and
\(\pi_0(\cdot \, \vert \, L) \) be the conditional distribution of
\(A\) given \(L\). The distribution for the observed data can now be
represented as \( dP_0(o) = p_0(o) d\nu(\ell) dt\) where
\(o = (\ell, a, t, \delta)\) and the density \(p_0\) under coarsening
at random \citep{vanRobins2003unified} factorizes as follows
\begin{align}
  \begin{split}
    p_0(o) = \mu_0(\ell) \pi_0(a\, \vert \, \ell) \big(
    {\lambda}^c_0 (t \mid a, \ell) \big)^{\1\lbrace\delta = 0\rbrace} 
    {S}_0^c(t - \mid a, \ell)   \prod_{j=1}^J  \big(
    {\lambda}_{0,j} (t \mid a, \ell) \big)^{\1 \lbrace \delta = j\rbrace}
    {S}_0(t- \mid a, \ell);
\end{split}
  \label{eq:survival:factorization}
\end{align}
here, still under coarsening at random,
\begin{align*}
  {S}_0 (t \mid a, \ell)
  & = \exp \bigg(- \int_0^t \sum_{j=1}^J
    {\lambda}_{0,j}(s \mid a, \ell) ds\bigg) ,
    \intertext{ is the survival function and }
    {S}^c_0 (t \mid a, \ell)
  &= \exp \bigg(- \int_0^t 
    {\lambda}_0^c(s \mid a, \ell) ds\bigg), 
\end{align*}
is the censoring survival function. For \(j=1,\ldots, J\) we further denote by 
\begin{align*}
  {F}_{0,j}(t \mid a,\ell)
  = \int_0^t {S}_{0}(s- \mid a, \ell) {\lambda}_{0,j} (s \mid a, \ell) ds, 
\end{align*}
the absolute risk function, or the subdistribution, for events of type
\(j\) \citep{gray1988class}. We consider the nonparametric statistical
model \(\mathcal{M}\) such that the density \(p\) of any
\(P\in \mathcal{M}\) factorizes as in
\eqref{eq:survival:factorization}.

\section{Multivariate target parameters}
\label{sec:intro:parameter}

We start by considering the case where interest is in estimation of a
multivariate target parameter taking values in a Hilbert space
\(\mathscr{H}_{\Sigma_d}\) of elements \(x \in \R^{d} \) endowed with
inner product and corresponding norm
\begin{align}
  \langle x, y \rangle_{\Sigma_d} = x^\top \Sigma_d^{-1} y,
  \qquad  \Vert x \Vert_{\Sigma_d} = \sqrt{x^\top \Sigma_d^{-1} x} ,
  \label{eq:define:inner:product}
\end{align}
for a user-supplied  positive definite matrix
\(\Sigma_d \in \R^{d} \times \R^{d}\).  Particularly, let now
\(\Psi \, : \, \mathcal{M} \rightarrow \mathscr{H}_{\Sigma_d}
\subset\R^d\), where \(d=JK\), be the multivariate target parameter
with components given by
\begin{align}
  \Psi_{j,k} (P) = \EE \big[ F_j ( t_k \mid A=a^*, L)\big] 
  = \EE \big[ P( T \le t_k , \Delta = j \mid A=a^*, L)\big], 
  \label{eq:Psi:j:k}
\end{align}
for \( j=1, \ldots, J\) and \( k=1, \ldots, K\).  Here \(a^*\) could
be either 1 or 0, to target the treatment or control specific
probabilities. Note that taking the difference between the two
corresponds to the average treatment effect. Causal assumptions of
consistency, positivity and no unmeasured confounding yields a causal
interpretation of \eqref{eq:Psi:j:k} as the absolute risk of events of
type \(j\) before time \(t_k\) had everyone in the population,
possibly contrary to fact, been assigned to treatment level \(A=a^*\)
\citep{hernanrobins,rytgaard2021estimation}. As is well-known, the
target parameter in \eqref{eq:Psi:j:k} can also be written as
\begin{align*}
  \Psi_{j,k} (P) = \int_{\mathcal{L}} F_1(\tmax \mid
  a^*, \ell)  \mu ( \ell)d\nu(\ell)
  = \int_{\mathcal{L}} \int_0^{\tau} S(s- \mid a^*, \ell) \Lambda_1 (ds \mid a^*, \ell)   \mu ( \ell)d\nu(\ell),
\end{align*}
seen to depend on all cause-specific hazards via
\(S(t \mid a^*,\ell) = \exp (- \int_0^t \sum_{j=1}^J {\lambda}_{j}(s
\mid a^*, \ell) ds)\).
    
\begin{remark}[Euclidean norm] 
  The user-supplied positive definite matrix
  \(\Sigma_d \in \R^{d} \times \R^{d}\) could be chosen as the
  identity matrix so that the norm in \eqref{eq:define:inner:product}
  is simply the standard Euclidean norm. This is the Hilbert space
  considered and implemented in previously proposed one-step
  procedures \citep{van2016one,cai2018one}. In this work, we will
  consider other alternative choices of norms as well, see Section
  \ref{sec:implementation}.
  \label{rem:choice:Sigma:euclidean:norm}
\end{remark}

\subsection{Efficient influence function}
\label{ssec:eff:ic}

The efficient influence function for the \((j,k)\)-specific component
of the target parameter is given by, see, e.g.,
\cite{rytgaard2021estimation},
\begin{align*}
  D_{j, t_k}^*(P) (O) = \sum_{l=1}^J \int   h_{j,l,k,t} (P) (O) \, \big( N_l(dt) -
  \1 \lbrace \tilde{T}\ge t\rbrace \lambda_l(t \, \vert \, A,L) dt \big)  \\
  + \,
  F_{j}(t_k \, \vert \, a, L)
  -   \Psi_{j,k}(P), 
\end{align*}
with the functions \( h_{j,l,k,t} \) defined by
\begin{align*}
  h_{j,l,k,t} (P)(O) =
  \frac{
  \1 \lbrace A=a\rbrace }{ \pi (A \, \vert \, L) } \frac{\1 \lbrace t  \le t_k \rbrace}{
  S^c( t- \, \vert \, A,L)
  } \begin{cases}
    1- \frac{ F_{l} ( t_k  \mid A, L) - F_{l}( t \mid A, L)}{
      S(t \, \vert \, A,L)}, &\text{when } \, l=j, \\
    -\frac{ F_j ( t_k  \mid A, L) - F_j ( t \mid A, L)}{
      S(t \, \vert \, A,L) },  &\text{when } \, l \neq j,
  \end{cases}
\end{align*}
characterizing the least favorable paths for the estimation problem.
We introduce a vectorized notation and use
\begin{align*}
  D^* &= (D^*_{j,t_k} \, : \, j=1, \ldots, J, k=1, \ldots, K), \\
  h_{l, t} &= (h_{j,l,k,t} \, : \, j=1, \ldots, J, k=1, \ldots,
                K),
\end{align*}
to refer to the \(d\)-dimensional vector of stacked efficient
influence functions and functions indexing the least favorable paths,
respectively.

\subsection{Nuisance parameters for the estimation problem}
\label{sec:nuisance}

We note that the target parameter depends only on the cause-specific
hazards \(\lambda = (\lambda_l \, : \, l = 1, \ldots, J)\) and the
covariate density \(\mu\) whereas the efficient influence function is
a mapping \(P\) through \(\lambda\) and \(\mu\) as well as the
treatment distribution \(\pi\) and the censoring survival function
\(S^c\). Construction of an efficient estimator requires estimation of
all these quantities. To reflect this in our notation we will use the
alternative notation
\begin{align*}
  \tilde{\Psi} (\lambda) := {\Psi} (P) ,
  \quad
  \tilde{D}^* (\lambda, \pi, S^c) (O) :=     {D}^* (P) (O), \quad
  \text{and,} \quad \tilde{h}_{l, t} (\lambda,\pi,S^c) :=h_{l,t} (P),
\end{align*}
when appropriate; note that we have suppressed the dependence on
\(\mu\) in this notation, as we will simply estimate the average over
the covariate distribution by the empirical average. We make the
following assumptions on the nuisance parameters.

\begin{assumption}[Conditions on \(\mathcal{M}\)]
  Assume that the nuisance parameters
  \(\lambda_1,\ldots, \lambda_J,\lambda^c, \pi\) can be parametrized
  by functions that are \cadlag (continuous from the right, limits
  from the left) and have finite sectional variation norm
  \citep{gill1995inefficient,van2017generally}. Assume further
  positivity, i.e., \(S^c(\tau \mid a, L) \pi(a \mid L) > \kappa >0\),
  for \(a=0,1\), and lastly that \(S(\tau\mid A, L) > \kappa' >0\).
  \label{ass:M:cadlag}
\end{assumption}

Assumption \ref{ass:M:cadlag} particularly allows for the construction
of highly adaptive lasso estimators \citep
{benkeser2016highly,van2017generally} for each of the nuisance
parameters.  For the present paper we simply assume that we have at
hand a set of initial estimators
\(\hat{\lambda}_n, \hat{\pi}_n, \hat{S}_n^c\); for details, see
\cite{rytgaard2021estimation}.

\subsection{One-step targeted maximum likelihood estimation}
\label{sec:ulfm}

To construct the one-step targeted maximum likelihood estimation
procedure for the multivariate parameter
\(\Psi \, : \, \mathcal{M} \rightarrow \mathscr{H}_{\Sigma_d}\), we
construct a one-dimensional universal least favorable submodel
\(\lambda_{\eps}\) for
\(\lambda = (\lambda_l \, : \, l = 1, \ldots, J)\) such that for any
\(\eps\ge 0\),
\begin{align}
  \frac{d}{d\eps} \mathbb{P}_n \mathscr{L} ( \lambda_{\eps} ) = \Vert \mathbb{P}_n \tilde{D}^*(\lambda_{\eps},
  \pi, S^c)\Vert_{\Sigma_d} ,  
  \label{eq:ulfm:def}
\end{align}
for a loss function \((O, \lambda) \mapsto \mathscr{L}(\lambda) (O) \)
and fixed \(\pi, S^c\).

\begin{definition}[Universal least favorable submodel]
  For each cause-specific hazard, we define as follows: 
\begin{align}
  \lambda_{l, \eps}  (t) = \lambda_l (t) \exp
  \bigg( \int_0^{\eps} \frac{
  \big\langle \mathbb{P}_n \tilde{D}^*(\lambda_x,\pi,S^c) ,  \tilde{h}_{l, t} (\lambda_x,\pi,S^c)
  \rangle_{\Sigma_d}
  }{
  \Vert \mathbb{P}_n
  \tilde{D}^*(\lambda_x,\pi,S^c) \Vert_{\Sigma_d}
  } dx \bigg), \qquad l=1, \ldots, J,
  \label{eq:ulfm}
\end{align}
and refer to
\(\lambda_{\eps}= ( \lambda_{l, \eps} \, :\, l = 1,\ldots,
J) \) as the universal least favorable submodel.
\label{defi:ulfm}
\end{definition}

We show that Definition \ref{defi:ulfm} indeed defines a universal
least favorable submodel for the log-likelihood loss function. Let
\((O, \lambda) \mapsto \mathscr{L}(\lambda) (O) \) denote the sum loss
function given as
 \begin{align}
   \mathscr{L} (\lambda ) (O)  = - \sum_{l = 1}^J \bigg(
   \int_0^{\tau}\log \lambda_l(t \mid A, L)) N(dt)
   - \int_0^{\tau} \1 \lbrace \tilde{T} \ge t\rbrace \lambda_l(t \mid A, L) dt\bigg) .
   \label{eq:loss}
 \end{align}
 For this loss function and the universal least favorable submodel
 from Definition \ref{defi:ulfm} we have that
\begin{align*}
  \frac{d}{d\eps}     \mathbb{P}_n \mathscr{L} ( \lambda_{\eps} )
  &  =\frac{
    \big( \mathbb{P}_n \tilde{D}^*(\lambda_{\eps},\pi,S^c) \big)^\top \Sigma_d^{-1}
    }{
    \Vert \mathbb{P}_n \tilde{D}^*(\lambda_{\eps},\pi,S^c) \Vert_{\Sigma_d} 
    }   \mathbb{P}_n \bigg( \sum_{l=1}^J \int_0^{\tau} \tilde{h}_{l, t} (\lambda_\eps, \pi, S^c) \big( N_l(dt) -
    \lambda_{l,\eps}(t)dt\big) \bigg)\\
  &   = \frac{
    \big( \mathbb{P}_n \tilde{D}^*(\lambda_{\eps},\pi,S^c) \big)^\top \Sigma_d^{-1}
    }{
    \Vert \mathbb{P}_n \tilde{D}^*(\lambda_{\eps},\pi,S^c) \Vert_{\Sigma_d} 
    }  \mathbb{P}_n \tilde{D}^*(\lambda_{\eps},\pi,S^c) =     \Vert \mathbb{P}_n \tilde{D}^*(\lambda_{\eps},\pi,S^c) \Vert_{\Sigma_d} ,  
\end{align*}
which verifies the desired property \eqref{eq:ulfm:def}.
Particularly, the maximum likelihood estimator along the path defined
by the universal least favorable submodel,
\begin{align}
  \hat{\eps}_n = \underset{\eps \in \R}{\mathrm{argmin}} \,\, 
  \mathbb{P}_n \mathscr{L}( \lambda_{\eps} ),
  \label{eq:mle:ulfm}
\end{align}
is a local maximum and thus solves
\( \Vert \mathbb{P}_n \tilde{D}^*(\lambda_{\hat{\eps}_n}, \pi,
S^c)\Vert_{\Sigma_d} = 0\). It thus follows that
\( \mathbb{P}_n \tilde{D}_{j,t_k}^*(\lambda_{\hat{\eps}_n}, \pi,
S^c)=0\) for every \(j=1, \ldots, J\) and \(k=1, \ldots, K\), i.e.,
all desired score equations are solved.

\subsection{Additional steps for implementation}
\label{sec:implementation}
The universal least favorable submodel defined by Equation
\eqref{eq:ulfm} implies a recursive implementation of the one-step
targeting procedure. This follows the practical construction suggested
by \citet[][Section 5.5.1]{van2018targeted}. Particularly, for each
\(l=1,\ldots, J\), we define as follows
\begin{align*}
  \lambda_{l, dx}&= \lambda_{l} \exp\bigg(  \frac{
                      \big( \mathbb{P}_n \tilde{D}^*(\lambda_{}, \pi, S^c) \big)^\top \Sigma_{d}^{-1}
                      \tilde{h}_{l, t} (\lambda_{}, \pi, S^c)
                      }{
                      \Vert \mathbb{P}_n
                      \tilde{D}^*(\lambda_{}, \pi, S^c) \Vert_{\Sigma_d} 
                      }\bigg) ,
                      \intertext{and, for \(m\ge 1\), }
                      \lambda_{l, (m+1)dx} &= \lambda_{l,mdx}  \exp\bigg( \frac{
                                               \big( \mathbb{P}_n \tilde{D}^*(\lambda_{mdx}, \pi, S^c)  \big)^\top
                                               \Sigma^{-1} \tilde{h}_{l, t} (\lambda_{mdx}, \pi, S^c)
                                               }{
                                               \Vert \mathbb{P}_n
                                               \tilde{D}^*(\lambda_{mdx}, \pi, S^c) \Vert_{\Sigma_d} 
                                               }\bigg) , 
\end{align*}
with a small step size \(dx\). The maximum likelihood estimator
\eqref{eq:mle:ulfm} can now be found by recursively updating any
current estimator with the small step size \(dx\).  To see why this
work, we repeat the arguments of \citet[][Section
5.5.1]{van2018targeted}. First note that the universal least favorable
submodel \eqref{eq:ulfm} can be written as a product integral
\citep{gill1990survey,andersen2012statistical} as follows
\begin{align*}
  \lambda_{l, \eps}  (t) = \lambda_l (t) \Prodi_{0}^{\eps}
  \bigg(  1 +  \frac{
  \big( \mathbb{P}_n \tilde{D}^*(\lambda_x, \pi,S^c) \big)^\top \Sigma_d^{-1} \tilde{h}_{l, t} (\lambda_x, \pi,S^c)
  }{
  \Vert \mathbb{P}_n
 \tilde{D}^*(\lambda_x, \pi,S^c) \Vert_{\Sigma_d}
  } dx \bigg), \qquad l=1, \ldots, J,
\end{align*}
such that
\begin{align*}
  \lambda_{l,dx}^*  (t) =  \lambda_l (t)
  \bigg(  1 +  \frac{
  \big( \mathbb{P}_n \tilde{D}^*(\lambda_x, \pi,S^c) \big)^\top \Sigma_d^{-1} \tilde{h}_{l, t} (\lambda_x, \pi,S^c)
  }{
  \Vert \mathbb{P}_n
  \tilde{D}^*(\lambda_x, \pi,S^c) \Vert_{\Sigma_d}
  } dx \bigg), \qquad l=1, \ldots, J.
\end{align*}
This corresponds \citep[see][Section 5.5.1, p. 66]{van2018targeted} to
a Taylor expansion of a local least favorable submodel
\(\lambda_\delta^{\mathrm{LLFM}}\) with
\begin{align*}
  \bigg\langle \frac{d}{d\delta}\bigg\vert_{\delta =0} \mathscr{L} (\lambda_\delta^{\mathrm{LLFM}})
  , \delta \bigg\rangle_{\Sigma_d} =
  \langle D^*(P), \delta \rangle_{\Sigma_d} . 
\end{align*}
We maximize the score locally over \(\delta\) with
\(\Vert \delta \Vert \le dx\) using the local least favorable
submodel.  Maximizing
\(\delta \mapsto \mathbb{P}_n \mathscr{L}
(\lambda_\delta^{\mathrm{LLFM}})\) corresponds to maximizing
\(\delta \mapsto \langle \mathbb{P}_n D^*(P), \delta
\rangle_{\Sigma_d}\), which, by the Cauchy-Schwartz inequality, is
maximized by
\begin{align*}
\delta^* = \frac{\mathbb{P}_n D^*(P)}{ \Vert \mathbb{P}_n D^*(P) \Vert_{\Sigma_d} } dx.
\end{align*}

Following up on Remark \ref{rem:choice:Sigma:euclidean:norm}, we
finish this section proposing the following two other choices for the
matrix \(\Sigma_d \in \R^{d} \times \R^{d}\) in the definition of the
Hilbert space norm.

{
  \begin{remark}[Norm weighted by the variance of the efficient
    influence function]
    Let \(\Sigma_d \in \R^{d} \times \R^{d}\) be the diagonal matrix
    with diagonal values given by the estimated variances
    \(\hat{\sigma}_{j,k}^2 = \mathbb{P}_n
    (\tilde{D}_{j,k}^*(\hat{\lambda}_{\eps_n},\hat{\pi}_n,\hat{S}^c_n))^2\)
    of the efficient influence functions.
  \label{rem:choice:sigma:variance:norm}
\end{remark}
}

{
\begin{remark}[Norm weighted by the covariance of the efficient
  influence function]
  Let \(\Sigma_d \in \R^{d} \times \R^{d}\) be the empirical
  covariance matrix
  \(\Sigma_{n} = \mathbb{P}_n
  D^*(\hat{\lambda}_{\eps_n},\hat{\pi}_n,\hat{S}^c_n)
  D^*(\hat{\lambda}_{\eps_n},\hat{\pi}_n,\hat{S}^c_n)^\top\) of the
  stacked efficient influence function
  \(\tilde{D}^*(\hat{\lambda}_{\eps_n},\hat{\pi}_n,\hat{S}^c_n)\).
  \label{rem:choice:Sigma:covariance:norm}
\end{remark}
}

\section{Infinite-dimensional target parameters}
\label{sec:infinite:dimensional}

We now turn our attention to the estimation of infinite-dimensional
target parameters \(\Psi \, : \, \mathcal{M} \rightarrow \mathscr{H}\)
with values in a Hilbert space \(\mathscr{H}\) of real-valued
functions on \(\R_+\) endowed with the inner product and corresponding
norm 
\begin{align}
  \langle f_1, f_2 \rangle = \int f_1(t) f_2(t) d\Gamma (t) , \qquad
  \Vert f\Vert = \sqrt{\langle f, f \rangle }
  \label{eq:define:inner:product:2}
\end{align}
for a user-supplied positive and finite measure \(\Gamma\). We
consider particularly the case where \(\Psi(P)\) is the
treatment-specific average cause one specific absolute risk curve
across time:
\begin{align*}
  \Psi(P)(t) = \EE \big[ P(T \le t, \Delta = 1 \mid A=a^*, L)\big]
  = \EE \big[ F_1( t \mid A=a^*, L)\big], \qquad t \in [ 0, \tau] . 
\end{align*}
When there are no competing risks, this is simply one minus the
treatment-specific survival curve. As in Section \ref{ssec:eff:ic}, we
let \(D^*_{t}(P)\) denote the relevant part of the efficient influence
function for \( \Psi (P)(t) \), for \(t\ge 0\), i.e.,
\begin{align}
  D_{t}^*(P) (O) =  \sum_{l=1}^J \int_0^\tau   h_{1,l,t,s} (P) (O) \, \big( N_l(ds) -
  \1 \lbrace \tilde{T}\ge s\rbrace \lambda_l(s \, \vert \, A,L) ds \big) , 
\end{align}
with the functions \( h_{1,l,t,s} \) for \(l=1, \ldots, J\)
defined by
\begin{align*}
  h_{1,l,t,s} (P)(O) =
  \frac{
  \1 \lbrace A=a^*\rbrace }{ \pi (A \, \vert \, L) } \frac{\1 \lbrace s  \le t \rbrace}{
       S^c( s- \, \vert \, A,L)
  } \begin{cases}
    1- \frac{ F_1 ( t \mid A, L) - F_1( s\mid A, L)}{
      S(s \, \vert \, A,L)}, &\text{when }\, l=1, \\
    -\frac{ F_1 ( t  \mid A, L) - F_1 ( s \mid A, L)}{
      S(s \, \vert \, A,L) },  &\text{when }\, l \neq 1.
  \end{cases}
\end{align*}
Again we use the alternative notation \(\tilde{\Psi}(\lambda)\),
\( \tilde{D}^* (\lambda, \pi, S^c) (O) \) and
\(\tilde{h}_{l, t} (\lambda,\pi,S^c)\) when appropriate.  The
universal least favorable submodel is defined according to Definition
\ref{defi:ulfm} with \(\langle \cdot, \cdot\rangle\) substituted for
\(\langle \cdot, \cdot\rangle_{\Sigma_d}\):
\begin{align*}
  \lambda_{l, \eps}  (t) = \lambda_l (t) \exp
  \bigg( \int_0^{\eps} \frac{
  \langle \mathbb{P}_n \tilde{D}^*(\lambda_x,\pi,S^c),  \tilde{h}_{l, t}
  (\lambda_x,\pi,S^c)\rangle
  }{
  \Vert \mathbb{P}_n
  \tilde{D}^*(\lambda_x,\pi,S^c) \Vert 
  } dx \bigg), \quad l=1, \ldots, J, 
\end{align*}
such that, together with
\(\lambda_{\eps} = (\lambda_{l,\eps}\, : \, l =1, \ldots, J)\) and the
sum loss function \(\mathscr{L}\) from Section \ref{sec:ulfm}, we have
that
\begin{align*}
  \frac{d}{d\eps}    \mathbb{P}_n \mathscr{L} ( \lambda_{ \eps})
  &  =\frac{
    \int  \mathbb{P}_n \tilde{D}_t^*(\lambda_\eps, \pi, S)  \mathbb{P}_n \big( \sum_{l=1}^J
    \int \tilde{h}_{l, t} (\lambda_\eps, \pi, S) \big( N_l(dt) -
    \lambda_{l,\eps}(t)dt\big) \big) d\Gamma (t) 
    }{
    \Vert \mathbb{P}_n \tilde{D}^*(\lambda_{\eps},\pi,S^c) \Vert 
    }
  \\
  &   = \frac{
    \int  \mathbb{P}_n \tilde{D}_t^*(\lambda_\eps, \pi, S)  \mathbb{P}_n
    \tilde{D}_t^*(\lambda_\eps, \pi, S) d\Gamma (t) 
    }{
    \Vert \mathbb{P}_n \tilde{D}_t^*(\lambda_\eps, \pi, S) \Vert 
    }   =     \Vert \mathbb{P}_n \tilde{D}_t^*(\lambda_\eps, \pi, S) \Vert .
\end{align*}

\begin{remark}[Implementation for an infinite-dimensional parameter]
  As we will see in Section \ref{sec:grid:for:sup}, the practical
  construction of the one-step targeting procedure can in fact be
  carried out along a grid of time-points with a grid that is chosen
  fine enough. Particularly, this means that the implementation of
  one-step targeting may follow exactly along that of Section
  \ref{sec:implementation}.
\label{rem:practical:infinite}
\end{remark}

\subsection{Conditions for weak convergence}
\label{sec:conditions:inference:infinite}

We here review what is needed for simultaneous inference for the
infinite-dimensional target parameter. These conditions are also
presented in \citet[][Chapter 5.6]{van2018targeted}.  Let
\(\hat{\psi}_{n}^*(t) = \Psi(\hat{P}^*_n) (t)\) denote an estimator
for \(\psi_0(t) = \Psi (P_0)(t)\) and let
\(\hat{\psi}_{n}^* = (\hat{\psi}_{n}^* (t) \, : \, t \in \R_+)\) as
well as \(\psi_0 = (\psi_0 (t) \, : \, t \in \R_+)\). Further define
the second-order remainder
\(R_2( \hat{P}_n^*, P_0) = (R_{2,t}( \hat{P}_n^*, P_0) \, : \, t\in
\R_+) \) by
\begin{align*}
 R_2( \hat{P}_n^*, P_0) =  \hat{\psi}^*_n - \psi_0  +  P_0 D^*(\hat{P}^*_n).
\end{align*}
Asymptotic linearity and efficiency for \(\hat{\psi}_{n}^* (t) \) is
established for fixed \(j\) and \(k\) by the usual conditions
\citep{rytgaard2021estimation}. The below conditions now tell us what we need
for asymptotic efficiency in supremum norm of
\(\hat{\psi}_{n}^* =\Psi(\hat{P}^*_n)\) in addition to the pointwise
efficiency:
\begin{enumerate}
\item[(i)] The estimator solves the efficient influence curve equation
  across all time-points, i.e.,
  \begin{align}
    \sup_{t \in [0,\tau]} \,\mathbb{P}_n D_t^*(\hat{P}^*_n) = o_P(n^{-1/2})
    ;
    \label{eq:eic:eq}
  \end{align}
\item[(ii)]  \(\lbrace D_t^*(P) \, : \, P \in \mathcal{M}, \, t \in
    \R_+\rbrace\) is a \(P_0\)-Donsker class and
    \(\sup_{t\in [0,\tau]} P_0 \big( D^*_{t}(\hat{P}^*_n) - D^*_{t}(P_0)
    \big)^2\rightarrow 0\) in probability, and
  \item[(iii)]
    \(\sup_{t\in [0,\tau]} \vert R_{2,t}(\hat{P}^*_n, P_0) \vert =
    o_P(n^{-1/2})\).
  \end{enumerate}
  Indeed, under conditions (i)--(iii) we have that
   \begin{align*}
    \sqrt{n} ( \hat{\psi}^*_n  - \psi_0)
    = \sqrt{n}(\mathbb{P}_n - P_0) D^*(P_0) + o_P(1)
    \overset{\mathcal{D}}{\rightarrow} \mathbb{G}_0,
  \end{align*}
  where \(\overset{\mathcal{D}}{\rightarrow} \) denotes convergence in
  distribution and \( \mathbb{G}_0\) is a Gaussian process with
  covariance structure given by the covariance function
  \(\rho (t_1, t_2) = P_0 D^*_{t_1} (P_0) D^*_{t_2} (P_0)\); that is,
  \( \sqrt{n}( \hat{\psi}^*_n - \psi_0)\) converges weakly as a random
  element of the \cadlag function space endowed with the supremum norm
  to \(\mathbb{G}_0\).

  \begin{remark}[Conditions (ii)--(iii)]
    Conditions (ii) and (iii) are covered by Assumption
    \ref{ass:M:cadlag}. Indeed, the class of \cadlag functions with
    finite variation is a well-known Donsker class \citep{van1996weak}
    and since the efficient influence function is a well-behaved
    mapping of the nuisance parameters, it inherits the Donsker
    properties. Moreover, Assumption \ref{ass:M:cadlag} allows for the
    construction of highly adaptive lasso estimators
    \citep{van2017generally} that have been shown to converge at a
    rate faster than \(n^{-1/3-\eta}\) with respect to the
    Kullback-Leibler dissimilarity, for any \(\eta>0\), to the true
    function \citep{2019arXiv190709244B}. That this is enough to
    establish (iii) follows from the double robustness structure of
    the second-order remainder, see \citet[][Supplementary
    Material]{rytgaard2021estimation}.
  \end{remark}

\subsection{Targeting over a grid to achieve supremum norm inference}
\label{sec:grid:for:sup}

In Remark \ref{rem:practical:infinite} we claimed that we only need to
do the targeting over a grid of time-points to solve all score
equations, i.e., to solve Equation \eqref{eq:eic:eq} where
\(\hat{P}^*_n\) now denotes the one-step targeted estimator. As
presented in Section \ref{sec:conditions:inference:infinite}, weak
convergence and asymptotic efficiency in the supremum norm then
follows. The claim of Remark \ref{rem:practical:infinite} follows
under the conditions given by the following theorem.

\begin{theorem}[Targeting over a grid]
  Define a grid \( 0 \le t_1 < t_2 < \cdots < t_{M_n} \le \tau \) of
  time-points in \([0,\tau]\) fine enough such that
  \(\max_m \, (t_{m}-t_{m-1}) = O_P(n^{-1/3- \eta})\) for some
  \(\eta >0\). For each \( t\in [0,\tau]\) let
  \(m(t) := \min_m \vert t -t_m \vert\). Assume, for all
  \( t\in [0,\tau]\) that:
  \begin{itemize}
  \item[A1.]
    \( P_0 \big( D^*_{t}(\hat{P}^*_n) - D^*_{t_{m(t)}}(\hat{P}^*_n)
    \big)^2\overset{P}{\rightarrow} 0\); 
  \item[A2.]
    \(\big\Vert ( h_{1,\ell,t_{m(t)} } - h_{1,\ell,t } ) (\hat{P}^*_n)
    \big\Vert_{\mu_0 \otimes \pi_0 \otimes \rho} \le K' \vert t_{m(t)}
    -t \vert^{1/2}\) for a constant \(K'>0\); here,
    \( \Vert \cdot \Vert_{\mu_0 \otimes \pi_0 \otimes \rho} \) denotes
    the \(L_2( {\mu_0 \otimes \pi_0 \otimes \rho} )\)-norm where
    \(\rho\) the Lebesgue measure and
    \({\mu_0 \otimes \pi_0 \otimes \rho} \) is the product measure of
    \(\mu_0\), \(\pi_0\) and \(\rho\).
  \end{itemize} Now, if
  \begin{align} \max_{t\in \lbrace t_1, \ldots, t_{M_n}\rbrace} \,
    \mathbb{P}_n D_t^*(\hat{P}^*_n) = o_P(n^{-1/2}) ,
  \label{eq:eic:eq:max}
\end{align}
then we have that
\begin{align}
  \sup_{t\in [0, \tau]} \, \mathbb{P}_n D_t^*(\hat{P}^*_n) = o_P(n^{-1/2}) . 
  \label{eq:eic:eq:sup}
\end{align}
\label{thm:targeting:grid}
\end{theorem}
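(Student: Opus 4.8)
The plan is to control, uniformly in $t$, the discrepancy between $\mathbb{P}_n D_t^*(\hat{P}^*_n)$ and its value at the nearest grid point $t_{m(t)}$, and then feed in the hypothesis \eqref{eq:eic:eq:max}. For every $t\in[0,\tau]$ I would start from the decomposition
\begin{align*}
  \mathbb{P}_n D_t^*(\hat{P}^*_n)
  = \mathbb{P}_n D_{t_{m(t)}}^*(\hat{P}^*_n)
  + (\mathbb{P}_n - P_0)\big( D_t^*(\hat{P}^*_n) - D_{t_{m(t)}}^*(\hat{P}^*_n)\big)
  + P_0 \big( D_t^*(\hat{P}^*_n) - D_{t_{m(t)}}^*(\hat{P}^*_n)\big) .
\end{align*}
The first term is bounded by $\max_{m}\mathbb{P}_n D_{t_m}^*(\hat{P}^*_n) = o_P(n^{-1/2})$ by \eqref{eq:eic:eq:max}, so the whole argument reduces to bounding the empirical-process term and the bias term uniformly over $t$.

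For the empirical-process term I would invoke that $\{D_t^*(P)\,:\,P\in\mathcal{M},\,t\in\R_+\}$ is $P_0$-Donsker (condition (ii), guaranteed by Assumption \ref{ass:M:cadlag}), hence so is the class of pairwise differences $D_t^* - D_{t_{m(t)}}^*$. Asymptotic equicontinuity of the empirical process over a Donsker class states that along any sequence of function pairs whose $L_2(P_0)$-distance shrinks to zero the rescaled increment is $o_P(1)$. Assumption A1 supplies exactly this shrinking, $P_0(D_t^* - D_{t_{m(t)}}^*)^2 \to 0$, so that
\begin{align*}
  \sup_{t\in[0,\tau]} \big\vert (\mathbb{P}_n - P_0)\big( D_t^*(\hat{P}^*_n) - D_{t_{m(t)}}^*(\hat{P}^*_n)\big)\big\vert = o_P(n^{-1/2}) .
\end{align*}

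The bias term is where the real work lies and is the main obstacle. Evaluating $D_t^*$ at the estimated hazards $\hat\lambda$ and taking the expectation under $P_0$, the martingale increments $N_l(ds) - \1\{\tilde T\ge s\}\lambda_{0,l}(s)ds$ integrate to zero against the predictable integrand, leaving only the drift contribution, so that
\begin{align*}
  P_0\big(D_t^* - D_{t_{m(t)}}^*\big)(\hat{P}^*_n)
  = P_0 \sum_{l=1}^J \int_0^{\tau} \big( \tilde h_{1,l,t,s} - \tilde h_{1,l,t_{m(t)},s}\big)(\hat{P}^*_n)\,\1\{\tilde T\ge s\}\big(\lambda_{0,l}(s) - \hat\lambda_{l}(s)\big)ds .
\end{align*}
Applying Cauchy--Schwarz in $L_2(\mu_0\otimes\pi_0\otimes\rho)$ and then Assumption A2, each summand is at most $K'\vert t - t_{m(t)}\vert^{1/2}\,\Vert \lambda_{0,l} - \hat\lambda_l\Vert$. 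The grid resolution gives $\vert t - t_{m(t)}\vert \le \max_m(t_m - t_{m-1}) = O_P(n^{-1/3-\eta})$, and the highly adaptive lasso rate of Assumption \ref{ass:M:cadlag} gives $\Vert \lambda_{0,l} - \hat\lambda_l\Vert = o_P(n^{-1/3-\eta})$; the product is therefore $O_P(n^{-1/6-\eta/2})\cdot o_P(n^{-1/3-\eta}) = o_P(n^{-1/2-3\eta/2})$, uniformly in $t$ because $K'$ in A2 does not depend on $t$. This is $o_P(n^{-1/2})$.

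Collecting the three bounds and taking the supremum over $t\in[0,\tau]$ yields \eqref{eq:eic:eq:sup}. The delicate points I would check carefully are (a) the martingale reduction of the bias term, verifying that the non-martingale parts of the second-order remainder cancel in the difference so that only the $t$-dependence through $\1\{s\le t\}$ and $F_1(t\mid\cdot)-F_1(s\mid\cdot)$ survives, and (b) that the $L_2(P_0)$-neighborhoods in the equicontinuity step shrink \emph{uniformly} in $t$, which again follows from A2 since it forces $\sup_t\Vert D_t^*(\hat{P}^*_n) - D_{t_{m(t)}}^*(\hat{P}^*_n)\Vert_{L_2(P_0)} = O_P(n^{-1/6-\eta/2})\to 0$. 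The exponent bookkeeping makes transparent why $n^{-1/3-\eta}$ is exactly the grid resolution required: the square-root Lipschitz bound of A2 halves the grid exponent to $n^{-1/6-\eta/2}$, and only the matching $n^{-1/3-\eta}$ nuisance rate then pushes the product below $n^{-1/2}$.
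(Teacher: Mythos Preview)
Your proof is correct and follows essentially the same route as the paper: the same three-term decomposition, the Donsker/asymptotic-equicontinuity argument (which the paper cites as Lemma~19.24 of van der Vaart) for the empirical-process term under A1, and the martingale reduction followed by Cauchy--Schwarz, A2, and the HAL rate for the bias term. Your handling of the bias term is in fact slightly more direct than the paper's, which first adds and subtracts the zero quantities $P_0 D^*_t(P_0)$ and $P_0 D^*_{t_{m(t)}}(P_0)$ and introduces an auxiliary function $f_{n,0}(t)=P_0\bigl(D^*_t(\hat P^*_n)-D^*_t(P_0)\bigr)$ before arriving at the same drift expression you write down immediately.
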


\begin{proof} See Appendix A.
\end{proof}

\begin{remark}[Assumption A2] In Appendix B we verify that Assumption
  A2 holds under a Lipschipz type condition on the cause one
  subdistribution.  We further note that the exponent of
  \(\alpha= 1/2\) for the upper bound \(\vert t_m - t\vert^{\alpha}\)
  of Assumption A2 is specific for our parameter of interest (the
  absolute risk function). For other choices of target parameters, it
  may be that \(\alpha \in [\tfrac{1}{2}, 1]\) such that a coarser
  grid is really needed for weak convergence.  \label{remark:A2}
\end{remark}
  
\subsection{Final remarks on the implementations}
\label{sec:practical:implementation:infinite}

We carry out the targeting over the grid
\( 0 \le t_1 < t_2 < \cdots < t_{M_n} \le \tau \) according to the
procedure outlined in Section \ref{sec:grid:for:sup}. Particularly,
this yields that
\begin{align*}
  \Vert \mathbb{P}_n \tilde{D}^*(\hat{\lambda}_{\hat{\eps}_n}, \hat{\pi}_n,
  \hat{S}^c_n)\Vert_{\Sigma_{M_n}} = o_P(n^{-1/2}),
\end{align*}
where \(\Vert \cdot \Vert_{\Sigma_{M_n}}\) denotes the
grid-\(M_n\)-dependent Hilbert space norm.  To apply Theorem
\ref{thm:targeting:grid} we need to control the maximum norm:
Specifically
\begin{align*}
  \max_{t\in \lbrace t_1, \ldots, t_{M_n}\rbrace} \,
  \frac{ \mathbb{P}_n \tilde{D}_t^*(\hat{\lambda}_{\eps}, \hat{\pi}_n, \hat{S}^c_n
  )}{\hat{\sigma}_t} \le \frac{1}{\sqrt{n} \log n } , 
\end{align*}
where
\(\hat{\sigma}_t^2 = \mathbb{P}_n \tilde{D}_t^*(\hat{\lambda}_{\eps},
\hat{\pi}_n, \hat{S}^c_n )^2\), is enough to ensure
\eqref{eq:eic:eq:max} of Theorem \ref{thm:targeting:grid}. Lemma
\ref{lem:from:l2:to:sup} below tells us that we can achieve exactly
this.

\begin{lemma}
  Assume that the likelihood is bounded along the path defined by the
  universal least favorable model:
  \begin{align*}
    \sup_{\eps \ge 0 } \, \mathbb{P}_n \mathscr{L} ( \hat{\lambda}_{ \eps}) <\infty.
  \end{align*}
  Then we can choose \(\eps^*_n\) large enough so that for all
  \(\eps \ge \eps^*_n\) we have that
\begin{align*}
  \max_{t\in \lbrace t_1, \ldots, t_{M_n}\rbrace} \,
  \frac{ \mathbb{P}_n \tilde{D}_t^*(\hat{\lambda}_{\eps}, \hat{\pi}_n, \hat{S}^c_n
  )}{\sigma_t} \le \frac{1}{\sqrt{n} \log n } . 
\end{align*}
\label{lem:from:l2:to:sup}
\end{lemma}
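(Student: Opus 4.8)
The plan is to read off the defining property of the universal least favorable submodel — that the $\eps$-derivative of the empirical loss equals the grid Hilbert-space norm of the empirical mean efficient influence function — and then convert the assumed boundedness of the loss into integrability of that norm along the path. Running the submodel with the grid norm $\Vert\cdot\Vert_{\Sigma_{M_n}}$ gives
\[
  \frac{d}{d\eps}\,\mathbb{P}_n\mathscr{L}(\hat{\lambda}_\eps)
  = \big\Vert \mathbb{P}_n\tilde{D}^*(\hat{\lambda}_\eps,\hat{\pi}_n,\hat{S}^c_n)\big\Vert_{\Sigma_{M_n}}
  =: g(\eps) \ge 0 ,
\]
so $\eps\mapsto\mathbb{P}_n\mathscr{L}(\hat{\lambda}_\eps)$ is nondecreasing. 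Since the assumption bounds this map from above, it converges, and integrating the identity yields $\int_0^\infty g(\eps)\,d\eps = \sup_{\eps}\mathbb{P}_n\mathscr{L}(\hat{\lambda}_\eps) - \mathbb{P}_n\mathscr{L}(\hat{\lambda}_0) < \infty$. Thus $g$ is integrable on $[0,\infty)$.

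The second ingredient is the elementary domination of a maximum of coordinates by the Euclidean-type norm of the coordinate vector. For the variance-weighted grid norm one has $\Vert\mathbb{P}_n\tilde{D}^*\Vert_{\Sigma_{M_n}}^2 = \sum_{m=1}^{M_n}(\mathbb{P}_n\tilde{D}^*_{t_m})^2/\sigma_{t_m}^2$, so
\[
  \max_{t\in\{t_1,\dots,t_{M_n}\}}\frac{\big|\mathbb{P}_n\tilde{D}^*_t(\hat{\lambda}_\eps,\hat{\pi}_n,\hat{S}^c_n)\big|}{\sigma_t}
  \le \Big(\sum_{m=1}^{M_n}\frac{(\mathbb{P}_n\tilde{D}^*_{t_m})^2}{\sigma_{t_m}^2}\Big)^{1/2}
  = g(\eps) .
\]
It therefore suffices to produce a threshold $\eps_n^*$ beyond which $g(\eps)\le 1/(\sqrt{n}\log n)$; the claimed bound on the maximum then follows immediately, so the whole lemma reduces to forcing the path norm $g$ below the target level and keeping it there.

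The delicate point — and the main obstacle — is that integrability of $g$ only gives $\liminf_{\eps\to\infty}g(\eps)=0$, whereas the lemma asks for the bound to hold for \emph{all} $\eps\ge\eps_n^*$, i.e.\ $g(\eps)\to 0$. I would close this gap by showing $g$ is uniformly continuous (indeed Lipschitz) on $[0,\infty)$, after which the standard fact that an integrable, uniformly continuous function on a half-line tends to zero delivers $g(\eps)\to0$ and hence the desired $\eps_n^*$. Uniform continuity is where the submodel structure and the model assumptions enter: by Cauchy--Schwarz the $\eps$-log-derivative of each hazard, $\langle\mathbb{P}_n\tilde{D}^*(\hat{\lambda}_\eps),\tilde{h}_{l,t}(\hat{\lambda}_\eps)\rangle/\Vert\mathbb{P}_n\tilde{D}^*(\hat{\lambda}_\eps)\Vert$, is bounded in modulus by $\Vert\tilde{h}_{l,t}(\hat{\lambda}_\eps)\Vert$, which is uniformly bounded under the positivity part of Assumption \ref{ass:M:cadlag}. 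Hence $\eps\mapsto\hat{\lambda}_\eps$ moves at a bounded rate, so $\eps\mapsto\tilde{D}^*(\hat{\lambda}_\eps)$ and therefore $g$ are Lipschitz with a constant depending only on the fixed finite-sample, fixed-grid problem. The care needed here is to check that this Lipschitz bound stays finite even as $\Vert\mathbb{P}_n\tilde{D}^*\Vert\to0$, which it does precisely because the apparent $0/0$ direction term is controlled by the Cauchy--Schwarz bound above rather than by the vanishing norm in the denominator.
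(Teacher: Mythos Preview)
Your argument follows the same skeleton as the paper's: use the defining identity $\tfrac{d}{d\eps}\mathbb{P}_n\mathscr{L}(\hat\lambda_\eps)=g(\eps):=\Vert\mathbb{P}_n\tilde D^*(\hat\lambda_\eps,\hat\pi_n,\hat S^c_n)\Vert_{\Sigma_{M_n}}\ge 0$ together with boundedness of the loss to force $g(\eps)\to 0$, and then dominate the standardized maximum by $c_n\,g(\eps)$ (the paper states this last bound with a generic constant $c_n$ covering all three norm choices; your $c_n=1$ is the variance-weighted special case). Where you differ is in rigor on the convergence step: the paper simply argues by contradiction that the limit of $g$ cannot be positive, tacitly treating ``$\lim g>0$'' as the negation of ``$g\to 0$''; you correctly observe that integrability alone only yields $\liminf g=0$ and supply a uniform-continuity argument to upgrade this to $\lim g=0$. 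In that sense your proposal is more careful than the paper's own proof on exactly the point you flag as delicate.

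The one place your extra argument needs shoring up is the claim that $\Vert\tilde h_{l,t}(\hat\lambda_\eps)\Vert_{\Sigma_{M_n}}$ is bounded \emph{uniformly in $\eps\ge 0$}. The functions $h_{1,l,k,s}$ involve $1/S_\eps(s\mid A,L)$, and the positivity bound $S(\tau\mid A,L)>\kappa'$ in Assumption~\ref{ass:M:cadlag} is stated for the model, not for the entire path $\eps\mapsto\hat\lambda_\eps$. Your Cauchy--Schwarz step correctly removes the $0/0$ worry coming from the vanishing denominator $\Vert\mathbb{P}_n\tilde D^*\Vert$, but it replaces it with a bound that itself depends on $\hat\lambda_\eps$; without an a priori reason that $S_\eps(\tau)$ stays bounded away from zero along the whole ray, the Lipschitz constant could in principle degrade. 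One clean way to close this is a continuation argument: as long as $S_\eps(\tau)\ge\kappa'/2$ you have a fixed Lipschitz bound on $\log\hat\lambda_\eps$, hence on $\Lambda_\eps$, and you can combine this with the already-established integrability of $g$ to preclude $S_\eps(\tau)$ ever crossing below $\kappa'/2$. Alternatively, you can simply restrict the submodel to the parameter set where the positivity constraints of Assumption~\ref{ass:M:cadlag} hold, which is implicit in the paper's setup.
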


\begin{proof}
  Assume, for contradiction, that
  \(\lim_{\eps \rightarrow \infty} \Vert \mathbb{P}_n
  \tilde{D}_t^*(\hat{\lambda}_{\eps}, \hat{\pi}_n, \hat{S}^c_n )
  \Vert_{\Sigma_{M_n}} > \eta'' \) for some \(\eta''>0\). Since the
  likelihood is an increasing function which is bounded and since
  \(\tfrac{d}{d\eps}\,\mathbb{P}_n \mathscr{L} ( \hat{\lambda}_{
    \eps}) = \Vert \mathbb{P}_n \tilde{D}_t^*(\hat{\lambda}_{\eps},
  \hat{\pi}_n, \hat{S}^c_n ) \Vert_{\Sigma_{M_n}} \), we reach a
  contradiction. Thus we have that
  \(\lim_{\eps \rightarrow \infty} \Vert \mathbb{P}_n
  \tilde{D}_t^*(\hat{\lambda}_{\eps}, \hat{\pi}_n, \hat{S}^c_n )
  \Vert_{\Sigma_{M_n}} = 0 \), and, accordingly, we can find an
  \(\hat{\eps}_n^*\) such that for all \( \eps \ge \hat{\eps}_n^*\) we
  have
  \(\Vert \mathbb{P}_n \tilde{D}_t^*(\hat{\lambda}_{\eps},
  \hat{\pi}_n, \hat{S}^c_n ) \Vert_{\Sigma_{M_n}} \le s_n\) for any
  choice \(s_n>0\).  The claim now follows since
  \begin{align}
    \max_{t\in \lbrace t_1, \ldots, t_{M_n} \rbrace} \, \frac{\mathbb{P}_n
    \tilde{D}_t^*(\hat{\lambda}_{\eps}, \hat{\pi}_n, \hat{S}^c_n
    )}{\sigma_{t}} \le c_n \, \Vert \mathbb{P}_n\tilde{D}_t^*(\hat{\lambda}_{\eps}, \hat{\pi}_n, \hat{S}^c_n
    )
    \Vert_{\Sigma_{M_n}}, \label{eq:max:bounded:by:norm}
  \end{align}
  where \(c_n > 0\) is a constant depending on \(n\). Thus, if we can
  control the right hand side of \eqref{eq:max:bounded:by:norm}, we
  can control the left hand side.
\end{proof}\color{black}

That the bound \eqref{eq:max:bounded:by:norm} holds for the
variance-weighted Hilbert space norm proposed in Remark
\ref{rem:choice:sigma:variance:norm} is straightforward. In Appendix C
we further demonstrate that it holds for the covariance-weighted norm
proposed in Remark \ref{rem:choice:Sigma:covariance:norm} with
\(c_n = \sqrt{ M_n}\).

\section{Empirical study}
\label{sec:empirical:study}

Our empirical study consists of a demonstration of our proposed
methodology on a publicly available dataset and further a simulation
study for proof of concept. The purpose is to demonstrate the
theoretical properties and to explore the properties of different
variations of targeting.

Specifically, we compare throughout the results using the one-step
estimator to the iterative counterpart of
\cite{rytgaard2021estimation} which can be used to target
one-dimensional target parameters; thus, to apply this method, we
target each component of the particular multivariate target parameter
separately.

\subsection{Demonstration in a dataset}
\label{sec:pbc:demo}

For a simple demonstration of our methods, we consider the publicly
available dataset from the Mayo Clinic trial in primary biliary
cholangitis (earlier called primary biliary cirrhosis, or, short, PBC)
conducted between 1974 and 1984 as available from the survival package
\citep{survivalpackage} in \texttt{R}. We also note that
an almost identical version of the dataset is described in
\citet[][Appendix D]{fleming2011counting}. The trial included
\(n=312\) patients who were randomized to treatment with the drug
D-penicillamine (\(A=1\)) or to placebo (\(A=0\)). The patients were
followed over time until the earliest of liver transplantation
(\(\Delta = 1\)), death (\(\Delta = 2\)) or end of follow-up. In
previous analyses, interest was in the effect of the treatment on the
risk only of cause one events (liver transplantation), whereas we here
consider estimation of the treatment-specific absolute risks of both
causes. Our point is solely to demonstrate that our one-step approach
leads to a compatible estimator, respecting the bounds of the
parameter space, whereas any approach estimating each real-valued
component separately is not guaranteed to do so.

We define our target parameter as the vector of treatment-specific
absolute risks for both event types across the ten time-points, i.e.,
\begin{align*}
  \Psi(P) = \big( \Psi_{1,1}(P) ,\ldots, \Psi_{1,10}(P), \Psi_{2,1}(P) , \ldots, \Psi_{2,10}(P)\big),
\end{align*}
with \( \Psi_{1,k}(P) = \EE \big[ F_j ( t_k \mid A=1, L)\big]\) for
\(j=1,2\) and \(k=1, \ldots, 10\). Since for all \(t\) we have that
\(S ( t\mid A, L) = 1 - F_1 (t \mid A,L) - F_2(t \mid A,L)\), and thus
\(D^*_{S,t} = -(D^*_{1,t} + D^*_{2,t}) \), the one-step targeted
estimator which solves the score equations for the two
treatment-specific absolute risk functions necessarily solves the
score equation for the treatment-specific survival probability
too. Table \ref{tab:pbc:one} presents the results using the one-step
algorithm: As can be noted, the sum of all three estimated state
probabilities (\(\hat{F}_1+\hat{F}_2 + \hat{S}\)) is ensured to be 1
at all time-points.  Table \ref{tab:pbc:iter} shows the results using
the iterative targeting procedure of \cite{rytgaard2021estimation} where each
treatment-specific probability is targeted on its own leading to
incompatible estimators not guaranteed summing up to 1.

\begin{table}[ht]
\centering
\begin{adjustbox}{tabular=rcccccccccc,center}
  & $t_{1}$ & $t_{2}$ & $t_{3}$ & $t_{4}$ & $t_{5}$ & $t_{6}$ & $t_{7}$ & $t_{8}$ & $t_{9}$ & $t_{10}$ \\
  \hline\hline
  $\hat{F}_1$ & 0.0000 & 0.0051 & 0.0313 & 0.0317 & 0.0472 & 0.0472 & 0.0557 & 0.0679 & 0.0822 & 0.0822 \\
  $\hat{F}_2$ & 0.0623 & 0.0743 & 0.1157 & 0.1871 & 0.2234 & 0.2741 & 0.3111 & 0.3726 & 0.4183 & 0.4185 \\
  $\hat{S}$ & 0.9377 & 0.9206 & 0.8530 & 0.7812 & 0.7294 & 0.6787 & 0.6332 & 0.5596 & 0.4996 & 0.4993 \\
  \hline
  sum & 1.0000 & 1.0000 & 1.0000 & 1.0000 & 1.0000 & 1.0000 & 1.0000 & 1.0000 & 1.0000 & 1.0000 \\
\end{adjustbox}
\caption{Estimated treatment-specific state occupation probabilities
  using the one-step algorithm for the dataset from the Mayo Clinic
  trial. }
\label{tab:pbc:one}
\end{table}

\begin{table}[ht]
\centering
\begin{adjustbox}{tabular=rcccccccccc,center}
  & $t_{1}$ & $t_{2}$ & $t_{3}$ & $t_{4}$ & $t_{5}$ & $t_{6}$ & $t_{7}$ & $t_{8}$ & $t_{9}$ & $t_{10}$ \\
  \hline\hline
  $\hat{F}_1$ & 0.0000 & 0.0051 & 0.0305 & 0.0307 & 0.0449 & 0.0449 & 0.0532 & 0.0665 & 0.0821 & 0.0821 \\
  $\hat{F}_2$ & 0.0674 & 0.0792 & 0.1203 & 0.2065 & 0.2443 & 0.2977 & 0.3329 & 0.3969 & 0.4450 & 0.4452 \\
  $\hat{S}$ & 0.9326 & 0.9156 & 0.8489 & 0.7628 & 0.7098 & 0.6574 & 0.6151 & 0.5414 & 0.4811 & 0.4785 \\
  \hline
  sum & 1.0000 & 1.0000 & 0.9997 & 1.0000 & 0.9989 & 0.9999 & 1.0013 & 1.0047 & 1.0082 & 1.0058 \\
\end{adjustbox}
\caption{Estimated treatment-specific state occupation probabilities
  using the iterative algorithm for the dataset from the Mayo Clinic
  trial.}
\label{tab:pbc:iter}
\end{table}

\subsection{Simulation study with survival outcome}
\label{sec:sim:survival}

For proof of concept, we consider a simulation study with just a
single cause of interest. We draw three baseline covariates
\(L=(L_1,L_2,L_3)\) such that \(L_1\) is uniform on \([-1,1]\) and
\(L_2,L_3\) are uniform on \([0,1]\). We let treatment be randomized
and censoring be covariate independent. The hazard for distribution of
the survival time is given as follows
\begin{align*}
  \lambda_1 ( t \mid A, L) &= \lambda_{0}(t) \exp (-0.15 A  + 1.2 L_1^2 ), 
\end{align*}
with the baseline hazard corresponding to a Weibull distribution. We
consider first estimation of a multivariate target parameter, defined
as the vector of average treatment effects on the survival curve
evaluated across ten pre-specified time-points, i.e.,
\begin{align*}
  \Psi(P) = \big( \Psi_{1}(P) ,
  \ldots,\Psi_{10}(P)\big),
\end{align*}
with
\( \Psi_{k}(P) = \EE \big[ S ( t_k \mid A=1, L) - S ( t_k \mid A=0,
L)\big]\) for \(k=1,\ldots, 10\).

To construct the one-step estimator, we repeat the updating steps
described in Section \ref{sec:implementation} until
\begin{align*}
  \max_{ t\in \lbrace t_1, \ldots, t_{10}\rbrace} \,
  \frac{ \mathbb{P}_n \tilde{D}_{t}^*(\hat{\lambda}_{\eps^*}, \hat{\pi}_n,
  \hat{S}^c_n )}{\hat{\sigma}_t} \le \frac{1}{\sqrt{n} \log n } ,
\end{align*}
at which point, particularly,
\( \mathbb{P}_n \tilde{D}^*_{t} (\hat{\lambda}_{\eps^*}, \hat{\pi}_n,
\hat{S}^c_n) = o_P(n^{-1/2}) \) for \( k=1,\ldots, 10\). The
corresponding estimator for the target parameter is
\(\hat{\psi}^* = ( \hat{\psi}_{1}^*, \ldots, \hat{\psi}_{10}^*)= \big(
\tilde{\Psi}_{1} (\hat{\lambda}_{\eps^*}), \ldots,
\tilde{\Psi}_{10}(\hat{\lambda}_{\eps^*})\big)\).  Since all efficient
score equations are solved simultaneously, asymptotic linearity of the
estimators applies across time-points \(t_k\), and we have
\begin{align}
  \sqrt{n}
  ( \hat{\psi}^* - \psi_0 )
  \overset{\mathcal{D}}{\longrightarrow} N( 0 , \Sigma_0) ,
  \label{eq:multivariate:inference}
\end{align}
where \(\psi_0 = (\Psi_{1}( P_0), \ldots, \Psi_{10}( P_0))\) and
\(\Sigma_0 = D^* (P_0)^\top D^* (P_0) \,\,\in \R^{d}\times \R^{d}\) is
the covariance matrix of the stacked efficient influence
function. Particularly, the asymptotic distribution in
\eqref{eq:multivariate:inference} can be used to provide simultaneous
confidence intervals
\( \big( \hat{\psi}_{k}^* - \tilde{q}_{0.95} {\hat{\sigma}_{t_k}} /
{\sqrt{n}}, \hat{\psi}_{k}^* + \tilde{q}_{0.95} {\hat{\sigma}_{t_k}}
/{\sqrt{n}} \big)\), where
\(\hat{\sigma}_{t_k}^2 = \mathbb{P}_n (\tilde{D}^*_{t_k} (
\hat{\lambda}_{\eps^*}, \hat{\pi}_n, \hat{S}^c_n))^2\) estimates the
variance of the \(k\)th efficient influence function, and
\(\tilde{q}_{0.95}\) is the 95\% quantile for the distribution of
\(\max_{k} \vert \hat{\psi}_{k}^* - \Psi_{k}(P_0)\vert / (\sigma_{k} /
\sqrt{n})\). Figure \ref{fig:sim:ci:cov:marg:sim:survival} illustrates
confidence intervals for a single simulated data set, and further
shows the coverage across simulation repetitions of marginal and
simultaneous confidence intervals based on the efficient influence
function for the \(\sigma_n\)-weighted one-step estimator.  Table
\ref{table:surv:1:rel:mse} further shows the relative mean squared
error for the \(\sigma_n\)-weighted one-step estimator compared to the
unweighted one-step estimator, the $\Sigma_n$-weighted one-step
estimator, and the iteratively targeted estimator from
\cite{rytgaard2021estimation}.

\begin{table}[ht]
\centering
\begin{adjustbox}{tabular=rcccccccccc,center}
  \(t_k\) & 0.1 & 0.256 & 0.411 & 0.567 & 0.722 & 0.878 & 1.033 & 1.189 & 1.344 & 1.5 \\
  \hline\hline
  $\sigma_n$-weighted & 1.000 & 1.000 & 1.000 & 1.000 & 1.000 & 1.000 & 1.000 & 1.000 & 1.000 & 1.000 \\
  $\Sigma_n$-weighted & 1.256 & 1.047 & 1.038 & 1.001 & 1.001 & 1.011 & 1.009 & 1.030 & 1.014 & 1.000 \\
  unweighted & 0.980 & 0.997 & 0.993 & 0.981 & 0.982 & 0.990 & 0.989 & 0.988 & 0.986 & 0.976 \\
  iterative & 1.254 & 1.045 & 1.016 & 0.995 & 1.004 & 1.011 & 1.008 & 1.025 & 1.024 & 1.014 \\
\end{adjustbox}
\caption{Results from a simulation study with sample size \(n=200\)
  and \(M=500\) repetitions. Shown are the relative mean squared
  errors across the \(M=500\) simulation repetitions for the
  \(\sigma_n\)-weighted one-step estimator, the $\Sigma_n$-weighted
  one-step estimator, the unweighted one-step estimator and the
  iteratively targeted estimator from \cite{rytgaard2021estimation}
  all relative to the \(\sigma_n\)-weighted one-step estimator.  }
\label{table:surv:1:rel:mse}
\end{table}

\begin{figure}  
  \begin{center}   
  \makebox[\textwidth][c]{
    \includegraphics[width=1\textwidth,angle=0]{./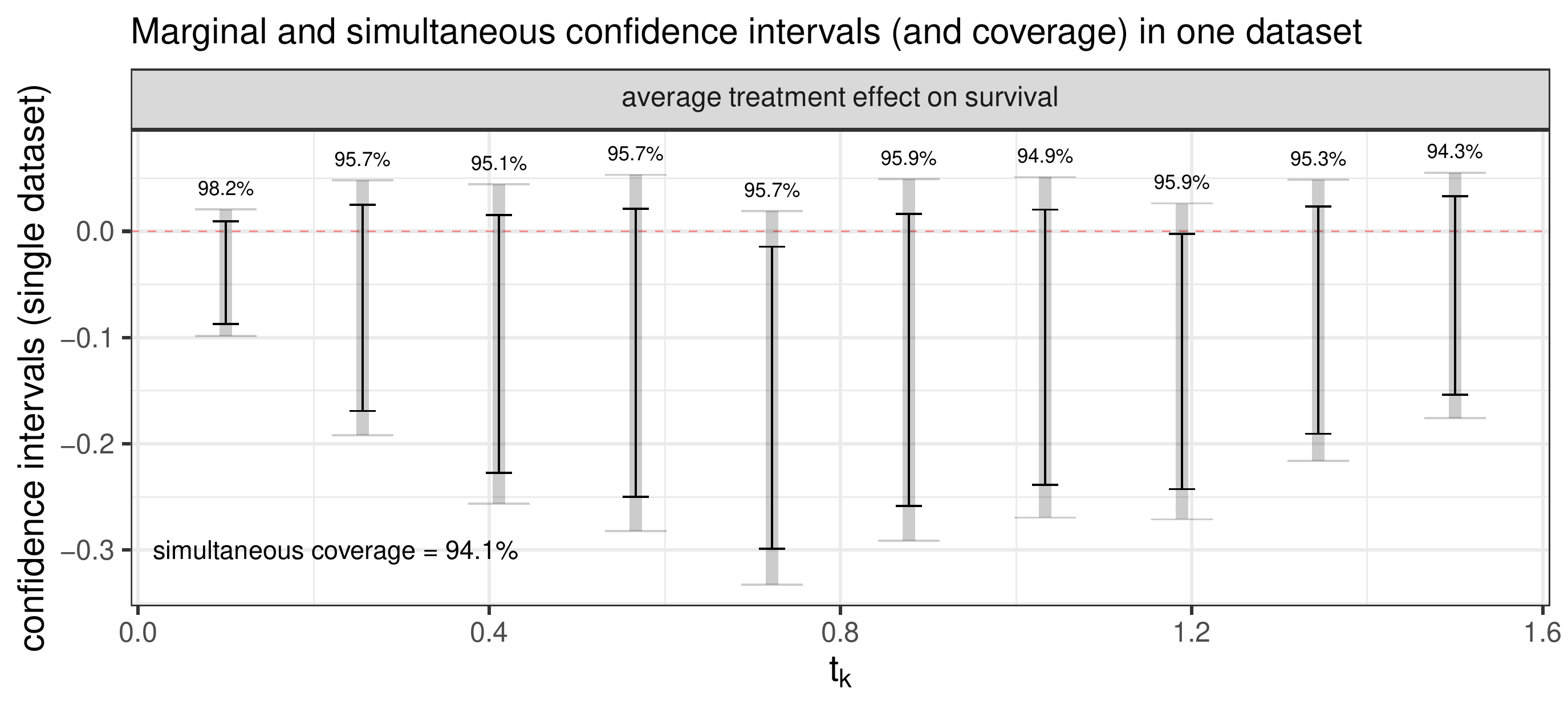}}
  \caption{Shown are marginal (black) and simultaneous (gray) \(95\%\)
    confidence intervals based on the efficient influence function for
    the \(\sigma_n\)-weighted one-step estimator in a single simulated
    data set (sample size \(n=200\)); above the bars are further shown
    marginal coverage across the \(M=500\) simulation repetitions. For
    this dataset, the overall hypothesis that there is no treatment
    effect on survival across the ten time-points is rejected. }
\label{fig:sim:ci:cov:marg:sim:survival} 
\end{center}  
\end{figure}

To further investigate empirically our claim of Theorem
\ref{thm:targeting:grid}, we next consider estimation of the
treatment-specific survival \textit{curve},
\begin{align*}
  \Psi(P)(t) = \EE \big[ P(T > t \mid A=1, L)\big]
  = 1- \EE \big[ F( t \mid A=1, L)\big], 
\end{align*}
across \(t \in [ 0, \tau]\). According to Theorem
\ref{thm:targeting:grid}, we only need to solve the efficient
influence curve equation over a grid of time-points to solve the
efficient influence curve equation across all times
\(t\in [ 0, \tau]\), as long as this grid is chosen fine enough. To
check this, we consider targeting over grids of increasing size while
checking if we have solved the efficient influence curve equation
across 100 randomly sampled time-points. In fact, for practical usage,
we suggest that the investigator does the same thing for their data;
indeed, one may choose the grid for a particular dataset by increasing
the size of the grid until the efficient influence curve equation is
solved up to a sufficient level for 100 randomly sampled
time-points. Table \ref{tab:fraction:solved} collects our experiences
with sample size \(n=1,000\) and a single simulation repetition. To
produce this table, we used the \(\Sigma_n\)-weighted one-step
estimator as we found this to converge considerably faster than the
other choices of norms.

\begin{table}[ht]
\centering
\begin{tabular}{rccccccc}
  \hline
  grid size used: & 20 & 30 & 40 & 60 & 80 & 100 & 120 \\ 
  fraction solved: & 0.72 & 0.88 & 0.88 & 0.88 & 0.95 & 0.98 & 0.99 \\ 
  worst-case ratio: & 3.00 & 2.21 & 1.84 & 2.09 & 1.94 & 1.26 & 1.47 \\ 
  \hline
\end{tabular}
\caption{Results from targeting over a grid of varying size. The first
  row shows the grid size used, the second row shows the fraction of
  the 100 randomly sampled time-points where the efficient influence
  curve equation is solved and the third row show the maximal ratio of
  the absolute value standardized empirical mean of the efficient
  influence curve
  \(\vert \mathbb{P}_n \tilde{D}^*_t(\hat{\lambda}_{\eps^*},
  \hat{\pi}_n,\hat{S}_n^c)\vert / \hat{\sigma}_t\), across the 100
  randomly sampled time-points, relative to the criterion
  \(1/(\sqrt{n}\log n)\).}
\label{tab:fraction:solved}
\end{table}

\subsection{Simulation study with competing risks}
\label{sec:sim:competing:risks}

To investigate the properties of the different choices of Hilbert
space norms (see Remarks
\ref{rem:choice:Sigma:euclidean:norm}--\ref{rem:choice:Sigma:covariance:norm})
and to further compare with separate estimation via the iteratively
targeted estimator from \cite{rytgaard2021estimation}, we consider
here a simulation study with two causes of interest.  The two
cause-specific hazards are given as follows
\begin{align*}
  \lambda_1 ( t \mid A, L) &= \lambda_{0}(t) \exp (-0.15 A  + 1.2 L_1^2 ), \\
  \lambda_2( t \mid A, L) &= \lambda_{0}(t) \exp (0.4 + 0.7L_1 - 0.4 A), 
\end{align*}
with the baseline hazard corresponding to a Weibull distribution.

We define now our target parameter as the vector of average treatment
effects on the absolute risks across three pre-specified time-points,
i.e.,
\begin{align*}
  \Psi(P) = \big( \Psi_{1,1}(P) ,
  \Psi_{1,2}(P), \Psi_{1,3}(P), \Psi_{2,1}(P) , \Psi_{2,2}(P), \Psi_{2,3}(P)\big),
\end{align*}
with
\( \Psi_{j,k}(P) = \EE \big[ F_j ( t_k \mid A=1, L) - F_j ( t_k \mid
A=0, L)\big]\) for \(j=1,2\) and \(k=1, 2, 3\).

Table \ref{table:cr:1:rel:mse} shows the relative mean squared error
of estimators for each parameters for the \(\sigma_n\)-weighted
one-step estimator compared to the unweighted one-step estimator, the
$\Sigma_n$-weighted one-step estimator, and the iteratively targeted
estimator from \cite{rytgaard2021estimation}. When using the
iteratively targeted estimator, we constructed targeted estimators for
the average treatment effects on both subdistributions (for each
time-point) separately, and computed the effect on the survival
probability by summing over the effects on the subdistributions.

\begin{table}[ht]
\centering
\begin{adjustbox}{tabular=r|ccc|ccc|cccc,center}
  & \multicolumn{3}{c}{\(\hat{F}_1\)} &
  \multicolumn{3}{|c}{\(\hat{F}_2\)}
  &   \multicolumn{3}{|c}{\(\hat{S}\)} \\
  \hline
  \(t_k\)  & 0.6 & 0.8 & 1 & 0.6 & 0.8 & 1 & 0.6 & 0.8 & 1 \\
  \hline\hline
  $\sigma_n$-weighted & 1.000 & 1.000 & 1.000 & 1.000 & 1.000 & 1.000 & 1.000 & 1.000 & 1.000 \\
  $\Sigma_n$-weighted & 1.054 & 1.068 & 1.034 & 1.077 & 1.046 & 1.017 & 1.079 & 1.135 & 1.094 \\
  unweighted & 1.001 & 0.997 & 1.001 & 0.996 & 1.000 & 1.001 & 1.002 & 1.001 & 1.002 \\
  iterative & 1.047 & 1.064 & 1.034 & 1.055 & 1.044 & 1.020 & 1.064 &   1.139 & 1.134
\end{adjustbox}
\caption{Results from a simulation study with sample size \(n=200\)
  and \(M=500\) repetitions. Shown are the relative mean squared
  errors across the \(M=500\) simulation repetitions for the
  \(\sigma_n\)-weighted one-step estimator, the $\Sigma_n$-weighted
  one-step estimator, the unweighted one-step estimator and the
  iteratively targeted estimator from \cite{rytgaard2021estimation}
  all relative to the \(\sigma_n\)-weighted one-step estimator.  }
\label{table:cr:1:rel:mse}
\end{table}

\section{Concluding remarks}
\label{sec:concluding:remarks}

The main contribution of the presented work is the methodology to
construct semiparametric efficient plug-in estimators simultaneously
targeting all target parameters. As far as we are concerned, there is
no other method to achieve this.  Another important result is that we
can get inference for the full survival curve across time by targeting
over a grid which is fine enough; this result is particularly useful
from a practical perspective, allowing for much simpler and faster
implementations.

Our simulation studies demonstrate the potential benefits achieved
when using our one-step compared to the iterative procedure, both in
terms of compatibility (probabilities summing to one, and monotonicity
of survival and competing risks curves), but also potentially in terms
of finite-sample mean squared error gains. The results of Table
\ref{table:surv:1:rel:mse} and Table \ref{table:cr:1:rel:mse} give
some indications of this.

We further found in our simulations different advantages of the
different options for the Hilbert space norm used to construct the
one-step targeted update. Indeed, Table \ref{table:surv:1:rel:mse}
shows a slightly lower mean squared error of the unweighted compared
to the variance-weighted, whereas Table \ref{table:cr:1:rel:mse} shows
a lower mean squared error of the covariance-weighted compared to the
variance-weighted. On the other hand, we generally found the
\(\Sigma_n\)-weighted to converge much faster than both the
\(\sigma_n\)-weighted and the unweighted one-step estimators, and this
option may thus be preferred in many situations, especially with
larger sample sizes where the mean squared errors gain are
diminished. Seemingly, it can differ from one situation to another
what is better. This will be investigated further in future work.

\newpage

\newpage

\appendix

\section*{Appendix A}
\renewcommand\theequation{A.\arabic{equation}}

\subsection*{Proof of Theorem \ref{thm:targeting:grid}}

\begin{proof} We consider an expansion as follows:
  \begin{align}
    \mathbb{P}_n D^*_t (\hat{P}^*_n)
    &=
      \mathbb{P}_n D^*_t (\hat{P}^*_n)- \mathbb{P}_n D^*_{t_{m(t)}} (\hat{P}^*_n)  + o_P(n^{-1/2})\notag\\
    &= (  \mathbb{P}^*_n - P_0 )  \big( D^*_{t} (\hat{P}^*_n) - D^*_{t_{m(t)}} (\hat{P}^*_n) \big)
      + P_0 \big( D^*_{t} (\hat{P}^*_n) - D^*_{t_{m(t)}} (\hat{P}_n^*) \big) + o_P(n^{-1/2})\notag\\
    \begin{split}
    &= \underbrace{ (  \mathbb{P}_n - P_0 )  \big( D^*_{t} (\hat{P}^*_n) - D^*_{t_{m(t)}} (\hat{P}^*_n) \big)}_{
      (*)
      }
      + P_0 \big( D^*_{t} (\hat{P}^*_n) - D^*_{t_{m(t)}} (\hat{P}^*_n) \big) + o_P(n^{-1/2})\\[-1.4em]
    &\qquad\qquad\qquad\qquad\qquad\qquad\qquad\qquad\qquad\quad\qquad
      - P_0 \big( D^*_{t} (P_0) - D^*_{t_{m(t)}} (P_0) \big);
      \end{split}
      \label{eq:expan1}
  \end{align}
  note that at the first equality we subtracted
  \(\mathbb{P}_n D^*_{t_{m(t)}} (\hat{P}^*_n) \) which is
  \(o_P(n^{-1/2})\) per \eqref{eq:eic:eq:max}, and at the third
  equality we added \(P_0 D^*_{t_{m(t)}} (P_0) \) and subtracted
  \(P_0 D^*_{t} (P_0) \) which are both zero. Now, consider \((*)\)
  above. By \citet[][Lemma 19.24]{van2000asymptotic} and Assumption
  A1, we have that this is \(o_P(n^{-1/2}) \).  Defining
  \(f_{n,0} (t) := P_0 ( D^*_{t} (\hat{P}^*_n) - D^*_{t} (P_0) ) \),
  we continue with the remaining terms of \eqref{eq:expan1}:
  \begin{align*}
    &    P_0 \big( D^*_{t} (\hat{P}^*_n) - D^*_{t_{m(t)}} (\hat{P}_n) \big) 
      - P_0 \big( D^*_{t} (P_0) - D^*_{t_{m(t)}} (P_0)\big) \\
    &\qquad =
      P_0 \big( D^*_{t} (\hat{P}^*_n) -  D^*_{t} (P_0) \big) 
      - P_0 \big(D^*_{t_{m(t)}} (\hat{P}_n) - D^*_{t_{m(t)}} (P_0)\big)  \\
    & \qquad = f_{n,0} (t) - f_{n,0} (t_{m(t)}), 
  \end{align*}
  where evaluation of \(f_{n,0} (t)\) reveals that
    \begin{align*}
      f_{n,0} (t) & =
                    \EE_{P_0} \bigg[  \sum_{l=1}^J \bigg(
                    \int_0^\tau   h_{1,l,t,s} (\hat{P}^*_n) (O) \, \big( N_l(ds) -
                    \1 \lbrace \tilde{T}\ge s\rbrace \hat{\lambda}^*_l(s \, \vert \, A,L) ds \big)   \\[-0.4em]
                  &\qquad\qquad\qquad -
                    \int_0^\tau   h_{1,l,t,s} (P_0) (O) \, \big( N_l(ds) -
                    \1 \lbrace \tilde{T}\ge s\rbrace \lambda_{0,l}(s \, \vert \, A,L) ds \big) \bigg)
                    \bigg] \\
                  &=  \EE_{P_0} \bigg[   \sum_{l=1}^J \int_0^\tau  \big(
                    h_{1,l,t,s} (\hat{P}^*_n) (O) -
                    h_{1,l,t,s} (P_0) (O) \big) \, \big( N_l(ds) -
                    \1 \lbrace \tilde{T}\ge s\rbrace \lambda_{0,l}(s \, \vert \, A,L) ds \big)\\[-0.4em]
                  &\qquad\qquad\qquad\qquad\qquad\quad -   \sum_{l=1}^J
                    \int_0^{\tilde{T}\wedge \tau }   h_{1,l,t,s} (\hat{P}^*_n) (O) \,     \big(
                    \hat{\lambda}^*_l(s \, \vert \, A,L) -
                    \lambda_{0,l}(s \, \vert \, A,L) \big)ds \bigg]. 
                    \intertext{The first line of the right hand side above is a martingale integral of
                    a predictable process; accordingly, this term is simply zero, and we get: }
                  &= -  \EE_{P_0} \bigg[ \sum_{l=1}^J
                    \int_0^{\tilde{T}\wedge \tau }  h_{1,l,t,s} (\hat{P}^*_n) (O) \, \big(
                    \hat{\lambda}^*_l(s \, \vert \, A,L) -
                    \lambda_{0,l}(s \, \vert \, A,L) \big)ds\bigg].
    \end{align*}
    Thus, we have that
    \begin{align*}
      & f_{n,0} (t) -  f_{n,0} (t_{m(t)}) \\
      & \,\, =  \sum_{l=1}^J \int_{\mathcal{L}} \sum_{a=0,1}
        \int_0^\tau   \big( h_{1,l,t_{m(t)},s} - h_{1,l,t,s} \big) (\hat{P}^*_n) (a,\ell) \, \big(
        \hat{\lambda}^*_l(s \, \vert \, a,\ell) -
        \lambda_{0,l}(s \, \vert \, a,\ell) \big)ds \big)  \pi_0 (a \mid \ell) d\mu_0(\ell), 
    \end{align*}
    and the Cauchy-Schwarz inequality immediately yields the following
    bound
\begin{align}
  \big\vert f_{n,0} (t) -  f_{n,0} (t_{m(t)}) \big\vert \le
  \sum_{l=1}^J
  \big\Vert (h_{1,l,t_{m(t)}} - h_{1,l,t} ) (\hat{P}^*_n) \big\Vert_{\mu_0 \otimes \pi_0 \otimes \rho} \,
  \big\Vert \hat{\lambda}^*_l -
  \lambda_{0,l} \big\Vert_{\mu_0 \otimes \pi_0 \otimes \rho},
      \label{eq:cs:bound}
\end{align}
By Assumption A2 we have that the first factor is bounded by
\( \vert t_{m(t)} - t \vert^{1/2}\). Thus we have that
\begin{align*}
  \big\Vert (h_{1,l,t_{m(t)}} - h_{1,l,t} ) (\hat{P}^*_n) \big\Vert_{\mu_0 \otimes \pi_0 \otimes \rho}
  = o_P( n^{-1/6 - \eta/2}), 
\end{align*}
by the assumption that the grid size goes to zero faster than
\(n^{-1/3-\eta}\), for some \(\eta >0\).  Furthermore, when using the
highly adaptive lasso estimator for each cause-specific hazard
\citep{rytgaard2021estimation} we have that
\begin{align*}
  \big\Vert \hat{\lambda}^*_l - \lambda_{0,l} \big\Vert_{\mu_0 \otimes \pi_0 \otimes \rho} = o_P (n^{-1/3-\eta}
  ). 
\end{align*}
We conclude that
\( \mathbb{P}_n D^*_t (\hat{P}^*_n) = o_P( n^{-1/2})\) which completes
the proof.
\end{proof}

\section*{Appendix B}
\renewcommand\theequation{B.\arabic{equation}}

\subsection*{On Assumption A2 of Theorem \ref{thm:targeting:grid}}

We here verify that Assumption A2 of Theorem \ref{thm:targeting:grid}
holds for our considered setting { under the assumption that
\begin{align}
  \sqrt{ \int_{\mathcal{L}} \sum_{a=0,1} \big( F_1 ( t_{m(t)} \mid a, \ell) - F_1 ( t
  \mid a, \ell) \big)^2 \pi_0 ( a, \ell) d\mu_0 (\ell)} \le K' \vert t_{m(t)} - t \vert^{1/2},
  \label{eq:ass:lipschiptz}
\end{align}
for a constant \(K'>0\).} Recall that Assumption A2 states that
\begin{align*}
  \big\Vert ( h_{1,l,t_{m(t)}} - h_{1,l,t} ) (\hat{P}^*_n)
  \big\Vert_{\mu_0 \otimes \pi_0 \otimes \rho} \le \vert t_m -t \vert^{1/2},
\end{align*}
where \( \Vert \cdot \Vert_{\mu_0 \otimes \pi_0 \otimes \rho} \)
denotes the \(L_2( {\mu_0 \otimes \pi_0 \otimes \rho} )\)-norm, i.e.,
\begin{align*}
  \Vert f\Vert_{\mu_0 \otimes \pi_0 \otimes \rho} = \sqrt{\int_{\mathcal{L}} \sum_{a=0,1}
  \int_0^{\tau} \big( f(t, a, \ell)\big)^2 dt \pi_0(a \mid \ell) d \mu_0(\ell) }.
\end{align*}
Let us consider what
\( \big( h_{1,l,t_{m(t)},s} - h_{1,l,t,s} \big) (P) (A,L)\) looks
like:
\begin{align*}
  \big( h_{1,l,t_{m(t)},s} - h_{1,l,t,s} \big)  (P)(A,L) =
  \frac{
  \1 \lbrace A=a^*\rbrace }{ \hat{\pi} (A \, \vert \, L) }   \frac{1}{
  S^c( s- \, \vert \, A,L)} \qquad\qquad\qquad\qquad\qquad\qquad\qquad\qquad\qquad \\
   \begin{cases}
     \1 \lbrace s \le t_{m(t)}\rbrace \big( 1- \frac{ F_1 ( t_{m(t)}
       \mid A, L) - F_1( s\mid A, L)}{ S(s \, \vert \, A,L)}\big) - \1
     \lbrace s \le t\rbrace \big( 1- \frac{ F_1 ( t \mid A, L) - F_1(
       s\mid A, L)}{
       S(s \, \vert \, A,L)}\big), &\text{when }\, l=1, \\
     \1 \lbrace s \le t\rbrace \big( \frac{ F_1 ( t \mid A, L) - F_1 (
       s \mid A, L)}{ S(s \, \vert \, A,L) }\big)- \1 \lbrace s \le
     t_{m(t)}\rbrace \big( \frac{ F_1 ( t_{m(t)}\mid A, L) - F_1 ( s
       \mid A, L)}{ S(s \, \vert \, A,L) }\big) , &\text{when }\, l
     \neq 1.
  \end{cases}
\end{align*}
Consider first the case that \(l=1\),
\begin{align}
  &  \1 \lbrace s \le t_{m(t)}\rbrace \bigg( 1- \frac{ F_1 ( t_{m(t)}
    \mid A, L) - F_1( s\mid A, L)}{ S(s \, \vert \, A,L)}\bigg) - \1
    \lbrace s \le t\rbrace \bigg( 1- \frac{ F_1 ( t \mid A, L) - F_1(
    s\mid A, L)}{
    S(s \, \vert \, A,L)}\bigg) \notag\\
  \begin{split}
  & \qquad = \big(  \1 \lbrace s \le t_{m(t)}\rbrace -  \1
    \lbrace s \le t\rbrace \big)  \bigg( 1 + \frac{  F_1(
    s\mid A, L)}{
    S(s \, \vert \, A,L)} \bigg) \\
  & \qquad\qquad\qquad\qquad\qquad - \, \frac{ 1}{
    S(s \, \vert \, A,L)} \Big(
    \1 \lbrace s \le t_{m(t)}\rbrace F_1 ( t_{m(t)}
    \mid A, L)
     - \1
     \lbrace s \le t\rbrace F_1 ( t \mid A, L) \Big),
     \label{eq:l:1}
   \end{split}
\end{align}
and then, likewise, the case that \(l\neq 1\),
\begin{align}
  &  \1 \lbrace s \le t\rbrace \bigg( \frac{ F_1 ( t \mid A, L) - F_1 (
       s \mid A, L)}{ S(s \, \vert \, A,L) }\bigg)- \1 \lbrace s \le
     t_{m(t)}\rbrace \bigg( \frac{ F_1 ( t_{m(t)}\mid A, L) - F_1 ( s
       \mid A, L)}{ S(s \, \vert \, A,L) }\bigg)  \notag\\
  \begin{split}
    & \qquad = \big( \1 \lbrace s \le t_{m(t)}\rbrace - \1 \lbrace s
    \le t\rbrace \big) \bigg( \frac{ F_1 ( s \mid A, L)}{ S(s \, \vert
      \, A,L) }\bigg)\\
  & \qquad\qquad\qquad\qquad\qquad - \, \frac{ 1}{
    S(s \, \vert \, A,L)} \Big(
    \1 \lbrace s \le t_{m(t)}\rbrace F_1 ( t_{m(t)}
    \mid A, L)
     - \1
     \lbrace s \le t\rbrace F_1 ( t \mid A, L) \Big).
     \label{eq:l:2}
   \end{split}
\end{align}
Assume with no loss of generality that \(t_{m(t)}\ge t\), and see that
\begin{align*}
\1 \lbrace s \le t_{m(t)}\rbrace - \1 \lbrace s \le t\rbrace = \1
  \lbrace s \in (t, t_{m(t)}]\rbrace.
\end{align*}
Since \(S(t\mid A, L) > \kappa' >0\) (Assumption \ref{ass:M:cadlag}),
both expressions in \eqref{eq:l:1} and \eqref{eq:l:2} are really
driven by the term
\begin{align}
  &    \1 \lbrace s \le t_{m(t)}\rbrace F_1 ( t_{m(t)} \mid A, L) -
    \1 \lbrace s \le t\rbrace F_1 ( t \mid A, L)  \notag\\
   & \qquad = \big ( \1 \lbrace s \le t_{m(t)}\rbrace - \1 \lbrace s \le t\rbrace\big)
    F_1 ( t_{m(t)} \mid A, L)
     +  \1 \lbrace s \le t\rbrace \big( F_1 ( t_{m(t)} \mid A, L) - F_1 ( t \mid A, L)  \big)\notag \\
  &      \qquad = \1 \lbrace s \in (t, t_{m(t)}]\rbrace F_1 ( t_{m(t)} \mid A, L)
    +  \1 \lbrace s \le t\rbrace \big( F_1 ( t_{m(t)} \mid A, L) - F_1 ( t \mid A, L)  \big) .
        \label{eq:dom:term}
\end{align}
Collecting what we have above now yields that
\begin{align*}
  &  \big( h_{1,l,t_{m(t)},s} - h_{1,l,t,s} \big)  (P)(A,L) 
  \\
  & \qquad  \le    \frac{
    \1 \lbrace A=a^*\rbrace }{ \hat{\pi} (A \, \vert \, L) }   \frac{1}{
    S^c( s- \, \vert \, A,L)} \bigg( \1
    \lbrace s \in (t, t_{m(t)}]\rbrace
    \big( 1+ \kappa'^{-1}\big)  \\
  &\qquad\qquad - \, \kappa'^{-1}\Big(
    \1 \lbrace s \in (t, t_{m(t)}]\rbrace F_1 ( t_{m(t)} \mid A, L)
    +  \1 \lbrace s \le t\rbrace \big( F_1 ( t_{m(t)} \mid A, L) - F_1 ( t \mid A, L)  \big)\Big)\bigg), 
\end{align*}
so that,
\begin{align*}
  &\big\Vert ( h_{1,l,t_{m(t)}} - h_{1,l,t} ) (\hat{P}^*_n)
    \big\Vert_{\mu_0 \otimes \pi_0 \otimes \rho} \\
  & \quad
    \le \eta \sqrt{ \int_{\mathcal{L}} \sum_{a=0,1} \bigg( \int_0^{\tau}
    \big(h_{1,l,t_{m(t)},s} - h_{1,l,t,s} \big)^2 (\hat{P}^*_n) (o) ds
    \bigg) \pi_0( a\mid \ell) d\mu_0 (\ell) } \\
  & \quad \le \eta \big( 1+ \kappa'^{-1}\big) \sqrt{ \int_0^{\tau}
    \1
    \lbrace s \in (t, t_{m(t)}]\rbrace
    ds} \\[-0.3cm]
  &\qquad\qquad\qquad\qquad + \, \eta\kappa'^{-1}\sqrt{  \int_{\mathcal{L}} \sum_{a=0,1} \bigg( \int_0^{\tau}
    \1 \lbrace s \in (t, t_{m(t)}]\rbrace
    \big( F_1 ( t_{m(t)} \mid a, \ell)\big)^2
    ds \bigg) \pi_0(a\mid \ell) d\mu_0(\ell)}\\
  &\quad\qquad\qquad\,\, + \, \eta\kappa'^{-1}\sqrt{  \int_{\mathcal{L}} \sum_{a=0,1} \bigg( \int_0^{\tau}
    \1 \lbrace s \le t\rbrace \big( F_1 ( t_{m(t)} \mid a, \ell) - F_1 ( t \mid a, \ell)  \big)^2  ds \bigg) \pi_0(a\mid \ell) d\mu_0(\ell)}\\
  &\quad = \eta\big( 1+ \kappa'^{-1}\big)
    \big( t_{m(t)} - t \big)^{1/2} +
    \,\eta\kappa'^{-1} \sqrt{  \big( t_{m(t)} - t \big)\int_{\mathcal{L}} \sum_{a=0,1}  \big( F_1 ( t_{m(t)} \mid a, \ell)\big)^2
    \pi_0(a\mid \ell) d\mu_0(\ell)} \\
  & \qquad\qquad\quad\qquad\qquad\qquad\quad +
    \, \eta\kappa'^{-1}  t^{1/2} \sqrt{  \int_{\mathcal{L}} \sum_{a=0,1} 
    \big( F_1 ( t_{m(t)} \mid a, \ell) - F_1 ( t \mid a, \ell)  \big)^2 \pi_0(a\mid \ell) d\mu_0(\ell)}\\
  & \quad \le \eta \big( 1+ \kappa'^{-1}\big) \big( t_{m(t)} - t \big)^{1/2}
    + \eta \kappa'^{-1}\big( t_{m(t)} - t \big)^{1/2}
    + \eta \kappa'^{-1} t^{1/2}  K' \big( t_{m(t)} - t \big)^{1/2}
  \\
  & \quad \le \eta \big( 1+ ( 2 + K') \kappa'^{-1} \big) \big( t_{m(t)} - t \big)^{1/2}
    ,    
\end{align*}
by application of the assumption stated in Equation
\eqref{eq:ass:lipschiptz}. This establishes the claim.

\subsection*{On Assumption A1 of Theorem \ref{thm:targeting:grid}}

Under condition (ii) from Section
\ref{sec:conditions:inference:infinite} we can replace Assumption A1
stating that
\begin{align*}
  P_0 \big( D^*_{t}(\hat{P}^*_n) - D^*_{t_{m(t)}}(\hat{P}^*_n)
  \big)^2\overset{P}{\rightarrow} 0, 
\end{align*}
by:
\begin{align}
&  \sup_{t} \,\, P_0 \big(
  D^*_{t}(\hat{P}^*_n) - D^*_{t}(P_0) \big)^2 \overset{P}{\rightarrow} 0 ,
                \label{eq:alternive:assumption:L2:1} \\
  & \sup_{\mathclap{t}}  \, \,\, \, P_0 \big(
    D^*_{t_{m(t)}}(P_0) - D^*_{t}(P_0) \big)^2 \overset{P}{\rightarrow} 0.
  \label{eq:alternive:assumption:L2:2}
\end{align}
This follows since
\begin{align*}
  & P_0 \big(  D^*_{t}(\hat{P}^*_n) - D^*_{t_{m(t)}}(\hat{P}^*_n) \big)^2\\
  & \quad =  P_0 \big(
    D^*_{t}(\hat{P}^*_n) - D^*_{t}(P_0) +
    D^*_{t_{m(t)}} (P_0) - D^*_{t_{m(t)}}(\hat{P}^*_n) +   D^*_{t}(P_0) - D
    ^*_{t_{m(t)}}(P_0)\big)^2 \\
  & \quad \le    P_0 \big(
    D^*_{t}(\hat{P}^*_n) - D^*_{t}(P_0) \big)^2 + P_0 \big( 
    D^*_{t_{m(t)}} (P_0) - D^*_{t_{m(t)}}(\hat{P}^*_n) \big)^2
    + P_0 \big( 
    D^*_{t} (P_0) - D^*_{t_{m(t)}}(P_0) \big)^2 ,
\end{align*}
where the second term is \(o_P(n^{-1/2})\) under condition (ii) from
Section \ref{sec:conditions:inference:infinite} and the remaining
terms are \(o_P(n^{-1/2})\) by
\eqref{eq:alternive:assumption:L2:1}--\eqref{eq:alternive:assumption:L2:2}.

\section*{Appendix C}
\renewcommand\theequation{C.\arabic{equation}}

\subsection*{Verifying the bound \eqref{eq:max:bounded:by:norm} for
  the covariance-weighted norm}

We demonstrate the bound \eqref{eq:max:bounded:by:norm} from Section
\ref{sec:practical:implementation:infinite} of the main text holds for
the covariance-weighted Hilbert space norm proposed in Remark
\ref{rem:choice:Sigma:covariance:norm}. For this purpose, assume that
there is at least one \(t_m\) such that
\( \vert \mathbb{P}_n D_{t_m}^*(P_{\eps}) \vert / \sigma_{t_m} > c_n
\Vert \mathbb{P}_n D^* (P_\eps) \Vert_{\Sigma_{M_n}}\) where
\(c_n = \sqrt{ M_n}\). For the positive definite \(M_n\times M_n\)
matrix \(\Sigma_{M_n}\) we have that
  \begin{align}
    M_n  (\mathbb{P}_n D^* (P_\eps))^\top  \Sigma_{M_n}^{-1} \mathbb{P}_n D^* (P_\eps)
    \ge     (\mathbb{P}_n D^* (P_\eps))^\top  D_{M_n}^{-1} \mathbb{P}_n D^* (P_\eps),
    \label{eq:linalg}
  \end{align}
  where \(D_{M_n}\) is the diagonal matrix with diagonal equal to the
  diagonal of \(\Sigma_{M_n}\).  Now \eqref{eq:linalg} directly
  implies that
  \begin{align*}
    \Vert \mathbb{P}_n D^* (P_\eps) \Vert_{\Sigma_{M_n}}
    \ge  \frac{1}{\sqrt{ M_n}}  \sqrt{ \sum_{m=1}^{M_n} \big\vert   \mathbb{P}_n
    D_{t_m}^*(P_{\eps}) \big\vert^2 / \sigma_{t_m}^2 } > \Vert \mathbb{P}_n D^* (P_\eps) \Vert_{\Sigma_{M_n}}, 
  \end{align*}
  contradicting the assumption that
  \( \vert \mathbb{P}_n D_{t_m}^*(P_{\eps}) \vert / \sigma_{t_m} > c_n
  \Vert \mathbb{P}_n D^* (P_\eps) \Vert_{\Sigma_{M_n}}\) with
  \(c_n = \sqrt{ M_n}\) for some \(t_m\). Thus,
  \eqref{eq:max:bounded:by:norm} from Section
  \ref{sec:practical:implementation:infinite} of the main text holds
  for the covariance-weighted Hilbert space norm with
  \(c_n=\sqrt{ M_n}\).

  Verifying \eqref{eq:linalg} comes down to showing that
  \(M_n \Sigma_{M_n}^{-1} - D_{M_n}^{-1}\) is non-negative
  definite. The matrix
  \(\tilde{\Sigma}_{M_n} = D_{M_n}^{-1/2} \Sigma_{M_n}
  D_{M_n}^{-1/2}\) is positive definite with all diagonals elements
  equal to 1, and we can write
  \(M_n \Sigma_{M_n}^{-1} - D_{M_n}^{-1} = D_{M_n}^{-1/2} \big( M_n
  \tilde{\Sigma}_{M_n}^{-1} - \mathbb{I}_{M_n} \big) D_{M_n}^{-1/2} \)
  where \(\mathbb{I}_{M_n}\) is the \(M_n\times M_n\) identity
  matrix. The sum of the eigenvalues of \(\tilde{\Sigma}_{M_n}\)
  equals the trace of \(\tilde{\Sigma}_{M_n}\), which is
  \(M_n\). Thus, all eigenvalues of \(\tilde{\Sigma}_{M_n}\) belongs
  to \((0,n)\) and all eigenvalues of \(\tilde{\Sigma}_{M_n}^{-1}\)
  belongs to \((1/n,\infty)\). In conclusion,
  \(M_n \tilde{\Sigma}_{M_n}^{-1} - \mathbb{I}_{M_n} \) is positive
  definite.

  \section*{Appendix D}
\renewcommand\theequation{D.\arabic{equation}}

\subsection*{Additional options for implementation}

Recall from Section \ref{sec:implementation} that we proceed
recursively with small update steps for each \(l\) defining as follows
\begin{align}
  \hat{\lambda}_{l, dx} &= \hat{\lambda}_{l} \exp\bigg(  \frac{
                            \big( \mathbb{P}_n \tilde{D}^*(\hat{\lambda}_{n}, \hat{\pi}_n, \hat{S}_n^c) \big)^\top \Sigma_{d}^{-1}
                            \tilde{h}_{l, t} (\hat{\lambda}_{n}, \hat{\pi}_n, \hat{S}_n^c)
                            }{
                            \Vert \mathbb{P}_n
                            \tilde{D}^*(\hat{\lambda}_{n}, \hat{\pi}_n, \hat{S}_n^c) \Vert_{\Sigma_d} 
                            }\bigg) ,\label{eq:recursive:steps:1}
                            \intertext{and, for \(m\ge 1\), }
                            \hat{\lambda}_{l, (m+1)dx}
  &= \hat{\lambda}_{l,mdx}  \exp\bigg( \frac{
    \big( \mathbb{P}_n \tilde{D}^*(\hat{\lambda}_{mdx}, \hat{\pi}_n, \hat{S}_n^c)  \big)^\top
    \Sigma^{-1} \tilde{h}_{l, t} (\hat{\lambda}_{mdx}, \hat{\pi}_n, \hat{S}_n^c)
    }{
    \Vert \mathbb{P}_n
    \tilde{D}^*(\hat{\lambda}_{mdx}, \hat{\pi}_n, \hat{S}_n^c) \Vert_{\Sigma_d} 
    }\bigg) ;  \label{eq:recursive:steps:2}
\end{align}
these recursive update steps are continued until for a given \(m^*\)
we have
\begin{align*}
  \max_{j,k} \,
  \frac{ \mathbb{P}_n \tilde{D}_{j,k}^*(\hat{\lambda}_{m^* dx}, \hat{\pi}_n, \hat{S}^c_n
  )}{\hat{\sigma}_{j,k}} \le \frac{1}{\sqrt{n} \log n } . 
\end{align*}
However, we also have the option to update along
\eqref{eq:recursive:steps:1}--\eqref{eq:recursive:steps:2}
\textit{only} in the directions that have not currently been solved.
Recall that
\begin{align*} 
  \mathbb{P}_n \tilde{D}^*(\hat{\lambda}_{m dx}, \hat{\pi}_n, \hat{S}_n^c)
  =  \big(\mathbb{P}_n \tilde{D}^*_{j,t_k} (\hat{\lambda}_{m dx}, \hat{\pi}_n, \hat{S}_n^c) \, : \, j=1, \ldots, J, k=1, \ldots, K\big).
\end{align*}
For a fixed \(m\), denote by \(\mathcal{J}_m, \mathcal{K}_m\) the set
of indices such that
\begin{align*}
  \frac{ \mathbb{P}_n \tilde{D}_{j,k}^*(\hat{\lambda}_{m dx}, \hat{\pi}_n, \hat{S}^c_n
  )}{\hat{\sigma}_{j,k}} \le \frac{1}{\sqrt{n} \log n }, \quad
  \text{for } \,\, j\in \mathcal{J}_m, k \in \mathcal{K}_m.  
\end{align*}
Let \( 0_{\mathcal{J}_m,\mathcal{K}_m}\) denote the zero vector of
length corresponding to the number of elements in \(\mathcal{J}_m\)
and \(\mathcal{K}_m\). Now, when we update along
\eqref{eq:recursive:steps:1}--\eqref{eq:recursive:steps:2}, we do not
include contributions from \(j\in \mathcal{J}_m, k\in \mathcal{K}_m\);
effectively, this means that we substitute the vector
\begin{align*} 
  \big((\mathbb{P}_n \tilde{D}^*_{j,t_k} (\hat{\lambda}_{m dx},
  \hat{\pi}_n, \hat{S}_n^c), 0_{\mathcal{J}_m,\mathcal{K}_m})
  \, : \, j\not\in\mathcal{J}_m, k\not\in\mathcal{K}_m \big).
\end{align*}
for
\(\mathbb{P}_n \tilde{D}^*(\hat{\lambda}_{m dx}, \hat{\pi}_n,
\hat{S}_n^c)\) in
\eqref{eq:recursive:steps:1}--\eqref{eq:recursive:steps:2}. At the
next round, we likewise define the set of indices
\(\mathcal{J}_{m+1}, \mathcal{K}_{m+1}\) such that
\begin{align*}
  \frac{ \mathbb{P}_n \tilde{D}_{j,k}^*(\hat{\lambda}_{(m+1) dx}, \hat{\pi}_n, \hat{S}^c_n
  )}{\hat{\sigma}_{j,k}} \le \frac{1}{\sqrt{n} \log n }, \quad
  \text{for } \,\, j\in \mathcal{J}_{m+1}, k \in \mathcal{K}_{m+1}, 
\end{align*}
and use
\begin{align*} 
  \big((\mathbb{P}_n \tilde{D}^*_{j,t_k} (\hat{\lambda}_{{m+1} dx},
  \hat{\pi}_n, \hat{S}_n^c), 0_{\mathcal{J}_{m+1},\mathcal{K}_{m+1}})
  \, : \, j\not\in\mathcal{J}_{m+1}, k\not\in\mathcal{K}_{m+1}  \big),
\end{align*}
for updating. Note that there is no guarantee that
\(\mathcal{J}_{m+1} \subseteq \mathcal{J}_{m}\) nor
\( \mathcal{K}_{m+1}\subseteq \mathcal{K}_{m}\), i.e., a particular
direction may be solved for a given step but then not solved for later
steps. However, we only proceed with a given update step if the norm
\( \Vert \mathbb{P}_n \tilde{D}^*(\hat{\lambda}_{mdx}, \hat{\pi}_n,
\hat{S}_n^c) \Vert_{\Sigma_d} \) is not increasing from \(m\) to
\(m+1\); otherwise we decrease the step size and proceed from
here. With this approach we repeat the recursive update steps until
\(\mathcal{J}_{m^*} = \lbrace 1, \ldots, J\rbrace\) and
\( \mathcal{K}_{m^*}= \lbrace 1, \ldots, K\rbrace\).
  
\newpage

\end{document}